\title{Is Learning in Games Good for the Learners?}
\author{William Brown\thanks{w.brown@columbia.edu} \and Jon Schneider\thanks{jschnei@google.com} \and Kiran Vodrahalli\thanks{kirannv@google.com}}
\date{\today}
\begin{document}

\maketitle

\begin{abstract}
We consider a number of questions related to tradeoffs between reward and regret in repeated gameplay between two agents. To facilitate this,  we introduce a notion of {\it generalized equilibrium} which allows for asymmetric regret constraints, and yields polytopes of feasible values for each agent and pair of regret constraints, where we show that any such equilibrium is reachable by a pair of algorithms which maintain their regret guarantees against arbitrary opponents. As a central example, we highlight the case one agent is no-swap and the other's regret is unconstrained. We show that this captures an extension of {\it Stackelberg} equilibria with a matching optimal value, and that there exists a wide class of games where a player can significantly increase their utility by deviating from a no-swap-regret algorithm against a no-swap learner (in fact, almost any game without pure Nash equilibria is of this form). Additionally, we make use of generalized equilibria to consider tradeoffs in terms of the opponent's algorithm choice. We give a tight characterization for the maximal reward obtainable against {\it some} no-regret learner, yet we also show a class of games in which this is bounded away from the value obtainable against the class of common ``mean-based'' no-regret algorithms. Finally, we consider the question of learning reward-optimal strategies via repeated play with a no-regret agent when the game is initially unknown. Again we show tradeoffs depending on the opponent's learning algorithm: the Stackelberg strategy is learnable in exponential time with any no-regret agent (and in polynomial time with any no-{\it adaptive}-regret agent) for any game where it is learnable via queries, and there are games where it is learnable in polynomial time against any no-swap-regret agent but requires exponential time against a mean-based no-regret agent. 
\end{abstract}

\section{Introduction}

How should two rational agents play a repeated, possibly unknown, game against one another? One natural answer -- barring any knowledge of the game, or the capacity to compute potentially computationally intractible equilibria -- is that they should employ some sort of learning algorithm to learn how to play over time. Indeed, there is a vast literature which studies what happens when all the players in a repeated game run (some specific type of) learning algorithms to select their actions. For example, when all players in a game simultaneously run no-(swap)-regret learning algorithms, it is known that the average strategy profile of the learners converges to a (coarse) correlated equilibrium \cite{moulin1978strategically, foster1998asymptotic, hart2000simple}. More recent works have studied how to design algorithms that converge to these equilibria at faster rates \cite{daskalakis2011near, anagnostides2022near, daskalakis2021near}, performance guarantees of such equilibria compared to the optimal possible welfare \cite{blum2008regret, hartline2015no}, and the specific dynamics of such learning algorithms \cite{daskalakis2010learning, LigettP11, mertikopoulos2018cycles}. 

In contrast, relatively little attention has been devoted to whether it is actually \emph{in the interest of these agents} to run these specific learning algorithms. For example, in a setting where all agents are running no-swap-regret learning algorithms, when can an agent significantly benefit by deviating and running a different type of algorithm? And if they can, what algorithm should the agent deviate to? 

\subsection{Our results}
We explore the following questions (and others) in the case where two agents repeatedly play a normal-form, general-sum game for $T$ rounds. 

\paragraph{When does reward trade off with regret?} While maximizing reward is often viewed as the objective of regret minimization against arbitrary adversaries, tensions may emerge when playing against another learning agent, as one's choice of actions has a causal effect on future loss functions as determined by the opponent's algorithm. In such cases, as we discuss further below, it turns out that demanding a stronger regret guarantee (e.g. asking for no-swap-regret instead of no-external-regret) may ultimately result in lower reward for an agent.
To analyze these tradeoffs, in Section \ref{sec:GE-learning}  we introduce a notion of {\it generalized $(\Fsw_A, \Fsw_B)$-equilibrium}, where $\Fsw_A$ and $\Fsw_B$ are the sets of ``deviation strategies'' over which players $A$ and $B$ minimize regret (e.g. fixed actions or swap functions), presented as an asymmetric extension of the linear $\Fsw$-equilibria considered by \cite{phiregret}. Each pair of strategy sets $(\Fsw_A, \Fsw_B)$ generates a polytope of $(\Fsw_A, \Fsw_B)$-equilibria, where any point then yields a reward value for each player. We show that all such points are feasible: for any game and any $(\Fsw_A, \Fsw_B)$-equilibrium $\strat$, there is a pair of algorithms which converge to $\strat$ while maintaining their respective $\Fsw_A$-regret and $\Fsw_B$-regret guarantees against arbitrary opponents. Further, deviating to a strategy with fewer constraints $\Fsw_A$ can {often} result in {\it strictly} improved reward for player $A$.
As a concrete application, we consider the question of deviating from simultaneous no-swap-regret play.

\paragraph{When is no-swap-regret play a stable equilibrium?} What should you do if you know that your opponent in a repeated game is running a no-swap-regret algorithm to select their actions? In \cite{DSSstrat}, the authors show that one utility-optimizing response (up to additive $o(T)$ factors) is to play a static (mixed) strategy (your \emph{Stackelberg strategy}) and obtain the \textit{Stackelberg value} of the game.
However, 
determining your Stackelberg strategy requires some knowledge of the game, and acquiring this knowledge from repeated play may be difficult (we address the latter issue in Section \ref{sec:learning-stack}). In comparison, it is relatively straightforward to also run a no-swap-regret learning algorithm. This begs the question: are there games where you obtain significantly less utility (i.e., at least $\Omega(T)$ less utility) by running a no-swap-regret learning algorithm instead of playing your Stackelberg strategy?

We show that the answer is \textit{yes}, such games exist and are relatively common. In fact, we provide an efficient algorithmic characterization of the games $G$ for which both players playing a no-swap-regret learning algorithm is an ($o(T)$-)approximate Nash equilibrium of the entire repeated ``meta-game''. The exact characterization is presented in Section \ref{sec:no-swap-eq} and is somewhat subtle, e.g., there are slightly different characterizations depending on whether we insist all pairs of no-swap-regret algorithms lead to approximate equilibria or only one specific pair, corresponding to best-case and worst-case values in the generalized equilibrium polytope. One consequence of both characterizations, however, 
is that for {\it almost all} games (in a measure-theoretic sense, considering arbitrarily small perturbations),
in order for it to be an approximate equilibrium for both players to play a low-swap regret strategy, the game $G$ must possess a \textit{pure} Nash equilibrium. That is, in any game without a pure Nash equilibrium, it is possible for at least one of the parties to do significantly better by switching from no-swap-regret learning to playing their Stackelberg strategy. 

Finally, we additionally show there are some games where playing a no-(external)-regret algorithm against another no-swap-regret learner weakly dominates playing a no-swap-regret algorithm, regardless of the specific choice of algorithms. This counters the intuition that stronger regret guarantees protect a player from worse outcomes. 

\paragraph{Optimizing reward against no-regret learners.} What if our opponent is not running a no-swap-regret algorithm, but simply a no-(external)-regret algorithm? In this case, it is still possible to obtain at least the Stackelberg value of the game by playing our Stackelberg strategy (no-regret algorithms are also guaranteed to eventually learn and play the best response to this strategy). However, unlike in the no-swap-regret setting, there exist specific games and no-regret algorithms where it is possible to obtain \textit{significantly} ($\Omega(T)$) more than the Stackelberg value by playing a specific dynamic strategy. 
This phenomenon was first observed in \cite{DSSstrat}, where 
a
specific game is given for which it is possible to obtain $\Omega(T)$ more than Stackelberg when playing against any no-regret algorithm in the family of \textit{mean-based learning algorithms} (including algorithms such as multiplicative weights and EXP3). However, many questions remain unanswered, such as understanding in which games it is possible to outperform playing one's Stackelberg strategy, and by how much.

In Section \ref{sec:no-reg-seps} we present some answers to these questions for the case of generic no-regret algorithms. Specifically, we first show that if player $B$ is running (any) no-regret algorithm, the utility of  player $A$ (regardless of what strategy they employ) is upper bounded by $\Val_A {(\emptyset, \mathcal{E})} \cdot T + o(T)$, where $\Val_A {(\emptyset, \mathcal{E})}$ is what we call the 
the {\it unconstrained-external (equilibrium) value} of the game for player $A$, which is given by the solution to a linear program. We then show that this upper bound is asymptotically tight: there exists a no-regret algorithm $\LL$ such that if player $B$ is playing according to $\LL$, then the player $A$ can obtain utility at least $\Val_A {(\emptyset, \mathcal{E})} \cdot T - o(T)$ by playing an appropriate strategy in response. 

Note that this characterization does not completely resolve the question of \cite{DSSstrat} -- it requires the construction of a fairly specific no-regret algorithm $\LL$, and it is still open what is possible against specific (classes of) no-regret algorithms (e.g., multiplicative weights or mean-based algorithms). In fact, we show a property of games which, when satisfied, implies that it is impossible to obtain the unconstrained-external value 
against a mean-based learner.

\paragraph{Learning the Stackelberg strategy through repeated play.} Finally, we address the question of how hard it is to actually identify the Stackelberg strategy in a game against a learning opponent.
Given full knowledge of the game $G$, finding an agent's Stackelberg strategy is simply a computational problem which turns out to be efficiently solvable by solving several small LPs (see \cite{CScompstack}). However, if the game (in particular, the opponent's reward matrix) is unknown, an agent must learn the Stackelberg strategy over time.
Existing work on learning Stackelberg strategies (e.g., \cite{Letchford2009LearningAA, PSWZlearnstack}) 
generally assumes access to a {\it best-response oracle} for the game (i.e., for a specific mixed strategy, how will an opponent best-respond?). In contrast, if our opponent is playing a specific no-regret learning algorithm, they may not immediately best respond to the strategies we play! This raises the following two questions. First, when is it possible to learn the Stackelberg equilibrium of a game while playing against a learning opponent? Second, is it easier to learn this equilibrium when playing against certain classes of learning algorithms?

In Section \ref{sec:learning-stack} we begin by showing that it is indeed possible to convert any best-response query algorithm for finding Stackelberg equilibria via best-response queries to an adaptive strategy that learns Stackelberg equilibria via repeated play against a generic no-regret learner, albeit potentially at the cost of an exponential blow-up in the number of rounds, e.g.\ for a query algorithm which makes $Q$ best-response queries, simulating it against a no-regret learner may require $T = \exp(Q)$ rounds of play. For the special case of opponents with no-{\it adaptive}-regret algorithms (such as online gradient descent), we show that only $T = \poly(Q)$ rounds are required in the worst case.
However, in general we show that exponential runtime can be necessary. In particular, we give an example of a game with $M$ actions where it is possible to learn the Stackelberg equilibrium in $\poly(M)$ rounds when playing against any no-swap-regret learning algorithm, but where it requires at least $\exp(M)$ rounds to learn this equilibrium when playing a mean-based no-regret algorithms. 

\subsection{Related work}

The broader literature on no-regret learning in repeated games is substantial, covering many equilibrium convergence results varying assumptions.
A recent line of work \cite{BMSWselling,DSSstrat,MMSSbayesian} considers problems related to optimizing one's reward when competing against a no-regret learner in a game. We extend these questions to consider the relationship and regret for an optimizer, as well as to settings where properties of the game are initially unknown, and give a series of separation results in terms of various notions of equilibrium.
Also relevant is the literature on analysis of no-regret trajectory dynamics, in particular \cite{LigettP11} which shows a game in which no-regret dynamics outperform the reward of the Nash equilibrium.
Additionally, there is also prior work considering regret minimization problems involving either best-responding or otherwise strategic agents (see e.g. \cite{10.1145/2764468.2764478,DBLP:journals/corr/abs-1804-11060}), as well as work considering alternate regret notions or behavior models for repeated Stackelberg games (e.g.\ \cite{StackDynamics, haghtalab2022learning}).

\subsection{Notation and preliminaries}

Throughout, we consider two-player bimatrix games $G = (A,B)$, where player $A$ (``the optimizer'') has action set $\A = \{a_1,\ldots,a_M\}$ and player $B$ (``the learner'') has action set $\B = \{b_1, \ldots, b_N\}$. When the optimizer plays action $a_i$ and the learner plays action $b_j$, the players receive rewards $u_A(a_i, b_j)$ and $u_B(a_i, b_j)$, respectively. We assume that the magnitude of each utility is bounded by a constant. The sets of mixed strategies for each player are denoted by $\Delta(\A)$ and $\Delta(B)$, respectively; when the optimizer plays a mixed strategy $\alpha \in \Delta(\A)$ and the learner plays $\beta \in \Delta(\B)$, the expected reward for the optimizer is given by $u_A(\alpha, \beta) = \sum_{i=1}^M \sum_{j=1}^N \alpha_i \beta_j u_A(a_i, b_j)$, with $u_B(\alpha, \beta)$ defined analogously. An action $b \in \B$ is a {\it best response} to a strategy $\alpha \in \Delta(\A)$ if $b \in \argmax_{b' \in \B}u_B(\alpha, b')$. We let $\BR(\alpha)$ be the set of all such actions for player $B$, and likewise $\BR(\beta)$ for player $A$.

\section{Generalized equilibria and no-$\Phi$-regret learning}\label{sec:GE-learning}

Here we introduce the notions of $\Phi$-regret and generalized equilibria, which we use to analyze the regret and reward of players in repeated bimatrix games under varying assumptions regarding the choice of regret benchmarks, the algorithms used, and the structure of the game.

Originally introduced in \cite{phireg2003} and extended to general convex settings in \cite{phiregret}, we consider the formulation of {\it linear} $\Phi$-regret as it relates to bimatrix games.
Given a sequence of action pairs $(a_{i_1} b_{j_1}), \ldots , (a_{i_T} b_{j_T})$ for $T > 0$ and some set of functions $\Fsw$, where each $f \in \Fsw$ maps actions $\A$ to action profiles in $\Delta(\A)$, we say that the {\it$\Fsw$-regret} for the optimizer (and analogously for the learner) is
\begin{align*}
    \Reg_{\Fsw}(T) = \max_{f \in \Fsw} \sum_{t=1}^T u_A(f(a_{i_t}), b_{j_t}) - u_A(a_{i_t}, b_{j_t}).
\end{align*}

\begin{definition}[No-$\Fsw$-regret learning]\label{def:no_F_regret}
  We say a learning algorithm $\LL$ is a \textup{no-$\Fsw$-regret} algorithm if, for some constant $c < 1$, we have that $\Reg_{\Fsw}(\LL, T) = O(T^c)$, where $\Reg_{\Fsw}(\LL, T)$ is the $\Fsw$-regret corresponding to the action sequence played by $\LL$.
\end{definition}

Some notable sets of regret comparator functions $\Fsw$ are the constant maps $\Ee$ (corresponding to {\it external regret}), where all input actions are mapped to the same output action, and the ``swap functions'' $\I$ (corresponding to {\it internal regret}\footnote{We refer to internal and swap regret interchangeably, as our focus is primarily on rates with respect to $T$.}), which contain all single swap maps $f_{ij}: [M] \rightarrow [M]$ where $f(i) = j$ and $f(i') = i'$ for $i' \neq i$.
Imposing these constraints on players in a game results in a {\it (coarse) correlated equilibrium}, which are instances of our notion of {\it generalized equilibrium}.

\begin{definition}[Generalized $(\Fsw_A, \Fsw_B)$-equilibria]\label{def:generalized_equilibria}
A $(\Fsw_A, \Fsw_B)$-\textup{equilibrium} $\strat \in \Delta(\mathcal{A} \times \mathcal{B})$ in a two-player game is a joint distribution over action profiles $(a, b)$ such that player $A$ cannot increase their expected reward by deviating with some strategy in $\Fsw_A$ and player $B$ cannot benefit by deviating with some strategy in $\Fsw_B$.
\end{definition}
In contrast to the $\Phi$-equilibria considered by \cite{phireg2003,phiregret}, here we allow constraints to be asymmetric between players.
While many equilibrium notions for two-player games impose symmetric regret constraints on each player (e.g.\ Nash, correlated, and coarse correlated equilibria), this need not always be the case. In Section \ref{sec:no-swap-eq}, we highlight Stackelberg equilibria as a motivating example for considering more general notions of asymmetric equilibria from the perspective of $\Phi$-regret, to determine when one should deviate from simultaneous no-swap play, and in Section \ref{sec:no-reg-seps} we characterize the maximum reward attainable against no-regret learners in terms of asymmetric equilibria.

We say that the {\it value} of a game $G$ for player $A$ of a certain equilibrium class $(\Fsw_A, \Fsw_B)$, denoted $\Val_A{(\Fsw_A, \Fsw_B)}$ is the maximum reward obtainable by player $A$ at some $(\Fsw_A, \Fsw_B)$-equilibrium (with $\Val_B{(\Fsw_A, \Fsw_B)}$ defined symmetrically for player $B$). Likewise, we say that the {\it min-value} of a game for a player and an equilibrium class, denoted by e.g.\ $\MinVal_A{(\Fsw_A, \Fsw_B)}$ for player $A$, is the minimum reward for a player over all $(\Fsw_A, \Fsw_B)$-equilibria in a game.
These capture the range of feasible average rewards under repeated play via $(\Fsw_A, \Fsw_B)$-regret dynamics.

\begin{proposition}\label{prop:gen-val-upper-bound}
For a repeated game over $T$ rounds where player $A$ uses a no-$\Fsw_A$-regret algorithm and player $B$ uses a no-$\Fsw_B$-regret algorithm, the average rewards obtained by each player are upper bounded by $\Val_A{(\Fsw_A, \Fsw_B)} + o(1)$ and $\Val_B{(\Fsw_A, \Fsw_B)} + o(1)$, respectively, and lower bounded by $\MinVal_A{(\Fsw_A, \Fsw_B)} - o(1)$ and $\MinVal_B{(\Fsw_A, \Fsw_B)} - o(1)$.
\end{proposition}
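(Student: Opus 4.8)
The plan is to connect the empirical distribution of play to the generalized-equilibrium polytope and then convert approximate feasibility into an approximate reward bound via linear programming duality. Let $\hat\sigma_T \in \Delta(\A \times \B)$ denote the empirical joint distribution induced by the realized action pairs $(a_{i_1}b_{j_1}), \ldots, (a_{i_T}b_{j_T})$, so that the average reward of player $A$ is exactly $u_A(\hat\sigma_T)$ and likewise $u_B(\hat\sigma_T)$ for $B$. First I would observe that dividing the definition of $\Fsw_A$-regret by $T$ shows that the no-$\Fsw_A$-regret guarantee for $A$ is precisely the statement that $\max_{f \in \Fsw_A} \mathbb{E}_{(a,b)\sim\hat\sigma_T}[u_A(f(a),b) - u_A(a,b)] \le \epsilon_T$ with $\epsilon_T = \Reg_{\Fsw_A}(\LL,T)/T = o(1)$, and symmetrically for $B$. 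In other words, $\hat\sigma_T$ satisfies every defining inequality of the $(\Fsw_A,\Fsw_B)$-equilibrium polytope up to an additive slack of $o(1)$; it is an approximate equilibrium.

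Next I would record that this polytope, call it $P$, is genuinely a polytope: each reward $u_A(\cdot)$, $u_B(\cdot)$ and each expected-deviation expression is linear in $\sigma$, and because each $f(a)$ ranges over the simplex $\Delta(\A)$ the ``for all $f \in \Fsw_A$'' constraint is a conjunction of linear inequalities whose binding cases occur at the (finitely many) extreme points of $\Fsw_A$ --- e.g.\ the constant maps for $\Ee$ or the single-swap maps for $\I$. Hence $\Val_A(\Fsw_A,\Fsw_B)$ and $\MinVal_A(\Fsw_A,\Fsw_B)$ are the optimal values of a linear program $\max/\min\, c^\top\sigma$ subject to $\sigma \in \Delta(\A\times\B)$ and finitely many constraints $M\sigma \le 0$, where $c$ collects the entries of $u_A$ and $M$ encodes the deviation gains of both players.

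The crux is to pass from approximate feasibility of $\hat\sigma_T$ to an approximate optimum. For this I would invoke LP duality. Fixing an optimal dual solution $(\lambda^\star,\mu^\star)$ with $\lambda^\star \ge 0$ and $\mu^\star = \Val_A(\Fsw_A,\Fsw_B)$ to the maximization LP, weak duality applied to the (generally $P$-infeasible but distribution-valued) point $\hat\sigma_T$ gives $c^\top \hat\sigma_T \le (\lambda^\star)^\top (M\hat\sigma_T) + \mu^\star \le \|\lambda^\star\|_1\,\epsilon_T + \Val_A(\Fsw_A,\Fsw_B)$, using $\lambda^\star \ge 0$, $\mathbf 1^\top\hat\sigma_T = 1$, and that each coordinate of $M\hat\sigma_T$ is at most $\epsilon_T$. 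Since the LP depends only on the game and on $\Fsw_A,\Fsw_B$ (not on $T$), $\|\lambda^\star\|_1$ is a fixed constant, so the upper bound on $A$'s average reward is $\Val_A(\Fsw_A,\Fsw_B) + o(1)$; the bound for $B$ and the symmetric lower bounds via $\MinVal$ follow by repeating the argument with the corresponding objectives and the dual of the minimization LP.

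The main obstacle I anticipate is ensuring the duality step is clean despite $\Fsw_A,\Fsw_B$ potentially being infinite sets of maps and despite $\hat\sigma_T$ not lying inside $P$. Both are handled above --- the linearity of the utilities reduces the constraints to finitely many extreme-point inequalities so the dual is finite-dimensional, and weak duality only requires $\hat\sigma_T \ge 0$ with $\mathbf 1^\top\hat\sigma_T = 1$, which the empirical distribution satisfies automatically. One should also note that $P$ is nonempty (equilibria exist, and indeed it is their existence that makes $\Val_A$ and $\MinVal_A$ well-defined), which guarantees a finite dual optimum with bounded multipliers and hence a constant $\|\lambda^\star\|_1$ absorbed into the $o(1)$.
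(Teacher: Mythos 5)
Your proof is correct and follows the same route as the paper's: the empirical distribution of play is an $o(1)$-approximately feasible point of the $(\Fsw_A,\Fsw_B)$-equilibrium polytope, and hence its (linear) reward cannot exceed $\Val$ nor fall below $\MinVal$ by more than $o(1)$. The paper's proof is a two-sentence assertion of exactly this; your LP-duality step (bounding the objective of an approximately feasible point by the optimum plus $\lVert\lambda^\star\rVert_1\,\epsilon_T$ with game-dependent, $T$-independent multipliers) is a careful justification of the quantitative $o(1)$ that the paper leaves implicit.
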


We consider an $\varepsilon$-approximate $(\Fsw_A, \Fsw_B)$-equilibrium to be a joint profile distribution where each constraint is satisfied up to additive error $\varepsilon$, 
connecting Definitions~\ref{def:no_F_regret} and~\ref{def:generalized_equilibria} as follows.
\begin{proposition}[Convergence of no-$\Fsw$-regret dynamics to generalized equilibrium] \label{prop:F_no_regret_to_F_eq}
Suppose after $T$ rounds of a game where player $A$ plays a no-$\Fsw_A$-regret algorithm and player $B$ plays a no-$\Fsw_B$-regret algorithm, player $A$ has average $\Fsw_A$-regret $\leq \varepsilon$ and player $B$ has average $\Fsw_B$-regret $\leq \varepsilon$. Let $\strat^t := p_A^t \times p_B^t$ denote the joint distribution over both players' actions at time $t$ and $\strat := \frac{1}{T} \sum_{t = 1}^T \strat^t$ denote the time-averaged history over joint player action distributions. Then, $\strat$ is an $\varepsilon$-approximate $(\Fsw_A, \Fsw_B)$-equilibrium where
\begin{align*}
    \E_{(a, b) \sim \strat}\left[u_A\left(a, b\right)\right] \geq&\; \E_{(a, b) \sim \strat}\left[u_A\left(f_A(a), b\right)\right] - \varepsilon, \text{ and}
    \\
    \E_{(a, b) \sim \strat}\left[u_B\left(a, b\right)\right] \geq&\; \E_{(a, b) \sim \strat}\left[u_B\left(a, f_B(b)\right)\right] - \varepsilon
\end{align*}
for every possible deviation $f_A \in \Fsw_A, f_B \in \Fsw_B$.
Likewise, if players $A$ and $B$ repeatedly play strategies corresponding to an $(\Fsw_A, \Fsw_B)$-equilibrium, then player $A$ is no-$\Fsw_A$-regret and player $B$ is no-$\Fsw_B$-regret.
\end{proposition}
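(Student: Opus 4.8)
The plan is to prove both directions of Proposition~\ref{prop:F_no_regret_to_F_eq} by unwinding the definitions of $\Fsw$-regret and of the time-averaged distribution $\sigma$, exploiting the linearity of the utility functions in each player's action distribution.

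For the forward direction, I would start from the hypothesis that player $A$'s average $\Fsw_A$-regret is at most $\varepsilon$, i.e.\ $\frac{1}{T}\max_{f_A \in \Fsw_A}\sum_{t=1}^T \bigl(u_A(f_A(a_{i_t}), b_{j_t}) - u_A(a_{i_t}, b_{j_t})\bigr) \le \varepsilon$. The key observation is that because each round's play is the product distribution $\sigma^t = p_A^t \times p_B^t$ and $u_A$ is bilinear, the expected per-round regret against a fixed deviation $f_A$ equals $\E_{(a,b)\sim\sigma^t}[u_A(f_A(a),b) - u_A(a,b)]$. Summing over $t$ and dividing by $T$, the time-average pulls inside the expectation by linearity, so $\frac{1}{T}\sum_t \E_{(a,b)\sim\sigma^t}[\cdot] = \E_{(a,b)\sim\sigma}[\cdot]$ exactly because $\sigma = \frac{1}{T}\sum_t \sigma^t$. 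Thus the average-regret bound becomes $\E_{(a,b)\sim\sigma}[u_A(f_A(a),b)] - \E_{(a,b)\sim\sigma}[u_A(a,b)] \le \varepsilon$ for every $f_A \in \Fsw_A$, which is precisely the first displayed inequality. The symmetric argument for player $B$, using $u_B$ and $f_B$, gives the second inequality, establishing that $\sigma$ is an $\varepsilon$-approximate $(\Fsw_A,\Fsw_B)$-equilibrium.

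For the converse direction, suppose the players repeatedly draw their actions from strategies realizing a fixed $(\Fsw_A,\Fsw_B)$-equilibrium $\sigma$. Here I would argue that the equilibrium constraints bound the \emph{expected} per-round regret by zero for every deviation, so over $T$ rounds the expected cumulative $\Fsw_A$-regret is $0$ for player $A$ (and likewise for $B$). The realized regret then concentrates around its expectation: since utilities are bounded by a constant and the per-round deviation gaps form a bounded martingale difference sequence, an Azuma--Hoeffding argument gives cumulative $\Fsw_A$-regret of $O(\sqrt{T \log|\Fsw_A|})$ (taking a union bound over the finitely many comparators in $\Fsw_A$), which is $o(T)$ and hence satisfies Definition~\ref{def:no_F_regret}.

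The main obstacle is ensuring the first direction's interchange of time-averaging and expectation is genuinely an equality rather than an approximation — this hinges on the fact that we are comparing against the \emph{realized} action sequence, so the regret is defined pathwise while the equilibrium is a statement about the induced distribution $\sigma$; I would be careful to note whether the regret in the hypothesis is the realized (sampled) regret or its conditional expectation, as the cleanest statement holds when $\sigma^t$ is taken to be the players' actual mixed strategies and regret is measured in expectation over the realized losses. In the converse, the subtlety is that a single equilibrium $\sigma$ need not factor as a product $p_A \times p_B$, so "playing strategies corresponding to $\sigma$" should be read as drawing the correlated pair $(a,b)\sim\sigma$ each round (via a shared signal) or as each player marginalizing appropriately; I would state this convention explicitly and then the martingale concentration goes through unchanged.
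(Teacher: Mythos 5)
Your proof is correct and follows essentially the same route as the paper's: the forward direction is exactly the paper's observation that linearity lets you interchange the time-average with the expectation, so the average-regret bound is literally the approximate-equilibrium inequality, and the converse is the same chain read backwards. Your added Azuma--Hoeffding concentration step and your remark that a correlated $\sigma$ must be realized via a shared signal are refinements of the converse that the paper compresses into ``the reverse direction follows by reversing the steps,'' but they do not change the argument's structure.
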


A general construction for no-$\Phi$-regret algorithms is given in \cite{phiregret}, which immediately implies feasibility of dynamics which converge to {\it some} instance of any class of $(\Fsw_A, \Fsw_B)$-equilibria in a game, possibly requiring agents to use differing algorithms if constraints are asymmetric.
We show that a much stronger claim is true: such a pair of algorithms exists for {\it any} $(\Fsw_A, \Fsw_B)$-equilibrium $\Psi$ in a game. 
These algorithms also satisfy a ``best-of-both-worlds'' property, in that they converge to $\Psi$ when played together, yet simultaneously maintain their corresponding regret guarantees against arbitrary adversaries.

\begin{theorem}\label{thm:targeted_equilibrium}
Consider any game $G$. Suppose there exists a no-$\Fsw_A$-regret learning algorithm $\LL_A$ and a no-$\Fsw_B$-regret learning algorithm $\LL_B$. For any particular $(\Fsw_A, \Fsw_B)$-equilibrium $\Psi$ in a game $G$, there exists a pair of learning algorithms $(\LL^*_A(\Psi), \LL^*_B(\Psi))$ such that: 
\begin{itemize}
    \item The empirical sequence of play when Player $A$ uses $\LL^*_A(\Psi)$ and Player $B$ uses $\LL^*_B(\Psi)$ converges to $\Psi$.
    \item $\LL^*_A(\Psi)$ and $\LL^*_B(\Psi)$ are no-$\Fsw_A$-regret and no-$\Fsw_B$-regret, respectively, against \textit{arbitrary} adversaries.
\end{itemize}
\end{theorem}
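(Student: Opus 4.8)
The plan is to build $\LL^*_A(\Psi)$ and $\LL^*_B(\Psi)$ as ``follow-the-schedule-or-fall-back'' strategies. Both players agree (using only their common knowledge of $\Psi$) on a fixed deterministic schedule of \emph{pure} profiles whose empirical distribution tracks $\Psi$; each plays its assigned action while the opponent is observed to comply, and switches \emph{permanently} to its guaranteed safe algorithm ($\LL_A$, resp.\ $\LL_B$) the instant the opponent is seen to deviate from the schedule. The two properties then come from (i) the equilibrium structure of $\Psi$ while both cooperate, and (ii) subadditivity of regret together with the fallback guarantee when an adversary deviates.

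First I would fix a deterministic sequence of pure profiles $(a_{i_t}, b_{j_t})_{t \geq 1}$ over $\A \times \B$ realizing $\Psi$ with low \emph{prefix} discrepancy: using a standard apportionment/rounding rule (at each step play the profile whose running count most lags its target $\ell \Psi_{ij}$, i.e.\ a chairman-assignment schedule), the prefix empirical distribution $\hat\Psi_\ell$ satisfies $\|\hat\Psi_\ell - \Psi\|_1 \leq C/\ell$ with $C = O(MN)$, since each coordinate's count stays within an additive constant of its target. Define $\LL^*_A(\Psi)$ to play $a_{i_t}$ at round $t$ for as long as $B$'s realized actions have matched $b_{j_1}, \dots, b_{j_{t-1}}$, and at the first round $\tau$ where $B$'s action differs from $b_{j_\tau}$ to switch permanently to a fresh instance of $\LL_A$ on the remaining rounds; define $\LL^*_B(\Psi)$ symmetrically. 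The convergence bullet is then immediate: when both players run $\LL^*(\Psi)$ neither monitor ever triggers, the realized play is exactly the schedule, and the empirical joint distribution equals $\hat\Psi_T \to \Psi$ (consistent with, and also implied by, the ``likewise'' direction of Proposition~\ref{prop:F_no_regret_to_F_eq}).

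For the adversarial guarantee I treat $A$ (the $B$ case is symmetric). Fix an arbitrary opponent and let $\tau$ be the first round at which $B$ departs from the schedule ($\tau = T+1$ if never). Because $\max_{f}$ of a sum is at most the sum of the two $\max_f$'s, the total $\Fsw_A$-regret is at most the phase-$1$ regret (rounds $1..\tau$) plus the phase-$2$ regret (rounds $\tau+1..T$). Phase~$2$ is a fresh run of the no-$\Fsw_A$-regret algorithm $\LL_A$ against an arbitrary loss sequence, hence has regret $O((T-\tau)^c) = o(T)$. For phase~$1$, on rounds $1..\tau-1$ player $B$ complied, so the realized play is exactly $\hat\Psi_{\tau-1}$; for any fixed $f \in \Fsw_A$ the cumulative deviation gain equals $(\tau-1)\,\E_{(a,b)\sim \hat\Psi_{\tau-1}}[u_A(f(a),b) - u_A(a,b)]$. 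Writing $\hat\Psi_{\tau-1} = \Psi + \Delta$ with $\|\Delta\|_1 = O(C/(\tau-1))$, the $\Psi$-term is $\leq 0$ by the equilibrium property of $\Psi$, the $\Delta$-term is $(\tau-1)\cdot O(C/(\tau-1))\cdot O(1) = O(MN)$, and the single deviating round $\tau$ contributes $O(1)$. Thus phase-$1$ regret is a constant $O(MN)$ in $T$, the total is $o(T)$, and taking $\tau = T+1$ shows the purely cooperative run itself incurs only $O(MN)$ regret.

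The step I expect to be the main obstacle is the uniform prefix-discrepancy requirement, which is precisely where the naive approach fails. If one instead scheduled $\Psi$ in long contiguous blocks, an opponent deviating early would leave $A$ having played a fixed action against a single profile in the support of $\Psi$ that may well admit a profitable $\Fsw_A$-deviation, producing $\Omega(T)$ regret even though $\Psi$ is an equilibrium \emph{on average}. The equilibrium condition only controls the \emph{average} deviation gain under $\Psi$, so the role of the apportionment schedule is to make every prefix already look like $\Psi$, thereby upgrading the average (equilibrium) bound to an anytime, prefix-wise bound; once that is in place, subadditivity of $\Fsw_A$-regret across the switch and the worst-case guarantee of $\LL_A$ finish the argument.
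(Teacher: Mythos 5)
Your proposal is correct and takes essentially the same approach as the paper: both cooperate on a deterministic de-randomized schedule whose empirical play approximates $\Psi$, and permanently defect to the fallback algorithms $\LL_A$, $\LL_B$ upon observing a deviation, with the regret split into a cooperative phase (controlled by the equilibrium property of $\Psi$ plus the discrepancy of the schedule) and a fallback phase (controlled by the base algorithm's guarantee). The only differences are bookkeeping: you use a single low-prefix-discrepancy schedule and a one-strike trigger, whereas the paper uses growing windows that each approximate $\Psi$ and tolerates $o(T)$ violations before defecting; both yield the same $o(T)$ bounds.
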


Our approach is for the algorithms to initially implement a schedule of strategies which converges to $\Psi$. Yet, these algorithms also detect when their opponent disobeys the schedule by tracking their $\Fsw$-regret with respect to $\Psi$, and after $o(T)$ violations can deviate indefinitely to playing a standalone no-$\Fsw$-algorithm for all remaining rounds. Several of our results throughout make use of Theorem \ref{thm:targeted_equilibrium}. Here we state a notable immediate implication for equilibrium selection.

\begin{corollary}\label{cor:paired_L_opt_eq}
    For any equilibrium scoring function $\Gamma : \Delta(\A \times \B) \rightarrow \R$ with a unique optimum computable in finite time, there exists a pair of learning algorithms $(\LL^*_A, \LL^*_B)$ such that: 
\begin{itemize}
    \item The empirical distribution when player $A$ uses $\LL^*_A$ and player $B$ uses $\LL^*_B$ converges to $\argmax_{\Psi} \Gamma (\Psi)$.
    \item $\LL^*_A$ and $\LL^*_B$ are no-$\Fsw_A$-regret and no-$\Fsw_B$-regret, respectively, against arbitrary adversaries.
\end{itemize}
\end{corollary}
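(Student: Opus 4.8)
The plan is to reduce the corollary directly to Theorem~\ref{thm:targeted_equilibrium}. First I would note that the set of $(\Fsw_A, \Fsw_B)$-equilibria of $G$ is a (nonempty, compact) subset of $\Delta(\A \times \B)$ --- nonempty because the hypothesized no-$\Fsw_A$- and no-$\Fsw_B$-regret algorithms exist and their time-averaged play converges to such an equilibrium by Proposition~\ref{prop:F_no_regret_to_F_eq} --- so that maximizing the scoring function $\Gamma$ over this set is well-posed. By assumption this maximizer is unique and computable in finite time; denote it $\Psi^* = \argmax_{\Psi} \Gamma(\Psi)$.

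Second, since $\Psi^*$ is itself a particular $(\Fsw_A, \Fsw_B)$-equilibrium, I would instantiate the paired algorithms supplied by Theorem~\ref{thm:targeted_equilibrium} at this target, setting $\LL^*_A := \LL^*_A(\Psi^*)$ and $\LL^*_B := \LL^*_B(\Psi^*)$. Each player computes $\Psi^*$ as a one-time preprocessing step before play begins. The two conclusions of the corollary then follow verbatim from those of Theorem~\ref{thm:targeted_equilibrium}: played against one another, the empirical distribution of play converges to $\Psi^* = \argmax_{\Psi}\Gamma(\Psi)$, while against arbitrary adversaries $\LL^*_A$ and $\LL^*_B$ retain their no-$\Fsw_A$-regret and no-$\Fsw_B$-regret guarantees respectively.

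The one point requiring care --- and the step I would flag as the crux --- is the role of uniqueness of the maximizer. The construction underlying Theorem~\ref{thm:targeted_equilibrium} (the schedule of strategies converging to a common target, together with $\Fsw$-regret monitoring of the opponent relative to that target) only drives the empirical play to $\Psi$ if both players agree on which equilibrium $\Psi$ they are aiming at. Uniqueness of $\argmax_{\Psi} \Gamma(\Psi)$ guarantees that the two players' independent finite computations return the identical target, so the pairing is consistent without any shared randomness or tie-breaking convention; were the optimum not unique, one would instead have to fix a common deterministic selection rule. Finally, finiteness of the computation ensures the preprocessing is a legitimate, $O(1)$-overhead step of a well-defined algorithm that consumes no rounds of play, so it is absorbed into the $o(T)$ convergence and regret terms and affects neither conclusion.
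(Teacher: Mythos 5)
Your proposal is correct and matches the paper's proof exactly: compute the unique optimizer $\Psi^*$ of $\Gamma$ in finite time as a preprocessing step, then instantiate the paired algorithms of Theorem~\ref{thm:targeted_equilibrium} at $\Psi^*$. Your added remarks on why uniqueness ensures both players target the same equilibrium are a reasonable elaboration of the same argument, not a different route.
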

\begin{proof}
    First optimize $\Gamma$ over $\Psi$ in finite time to find the unique optimum; then apply Theorem~\ref{thm:targeted_equilibrium} to the resulting desired equilibrium.
\end{proof}
Corollary~\ref{cor:paired_L_opt_eq} allows for optimizing for objectives such as total welfare or min-max utility for both players, and imposing conditions on generalized equilibria beyond $\Fsw$-regret constraints (e.g. product constraints for Nash equilibria) by assigning arbitrarily low scores to invalid strategy profiles.

In subsequent sections, we will primarily focus the function classes $\mathcal{E}$ and $\mathcal{I}$ corresponding to external and internal regret
as mentioned above, as the well empty set $\emptyset$ corresponding to unconstrained regret, and we additionally will consider the case when the game $G$ is initially unknown. 
Before continuing, we note that each player's values for any $(\Fsw_A, \Fsw_B)$-equilibrium class can be expressed via a linear program, whose size is polynomial in the game dimensions for these function classes of interest.

\begin{proposition}\label{prop:gen-eq-lp}
For any game $G$ and  constraints $(\Fsw_A, \Fsw_B)$, both $\Val_A{(\Fsw_A, \Fsw_B)}$ and $\Val_B{(\Fsw_A, \Fsw_B)}$ are computable via linear programs with $MN$ variables and $\poly(M,N, \abs{\Fsw_A}, \abs{\Fsw_B})$ constraints. 
When $\Fsw_A$ and $\Fsw_B$ belong to $\{\emptyset, \mathcal{E}, \mathcal{I}\}$, the number of constraints is $\poly(M,N)$.
\end{proposition}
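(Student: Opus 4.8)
The plan is to set up the defining linear program for $\Val_A(\Fsw_A,\Fsw_B)$ directly from Definition~\ref{def:generalized_equilibria} and then count variables and constraints. The optimization variable is the joint distribution $\sigma \in \Delta(\A \times \B)$, which has $MN$ nonnegative entries $\sigma_{ij} = \sigma(a_i, b_j)$; the simplex membership contributes the single equality $\sum_{i,j} \sigma_{ij} = 1$ together with the $MN$ nonnegativity constraints $\sigma_{ij} \geq 0$. The objective is the linear functional $\E_{(a,b)\sim\sigma}[u_A(a,b)] = \sum_{i,j} \sigma_{ij}\, u_A(a_i, b_j)$, which we maximize. This is manifestly an LP in $MN$ variables, so the only real content is bounding the number of equilibrium constraints and checking they are linear in $\sigma$.

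Next I would write out the deviation constraints. For player $A$, each comparator $f_A \in \Fsw_A$ (a map $\A \to \Delta(\A)$) yields one inequality
\begin{equation*}
\sum_{i,j} \sigma_{ij}\, u_A(a_i, b_j) \;\geq\; \sum_{i,j} \sigma_{ij}\, u_A(f_A(a_i), b_j),
\end{equation*}
and the right-hand side is linear in $\sigma$ because $u_A(f_A(a_i), b_j) = \sum_{k} [f_A(a_i)]_k\, u_A(a_k, b_j)$ is a fixed constant (the game utilities and $f_A$ are given). Symmetrically, each $f_B \in \Fsw_B$ gives one linear inequality for player $B$. Hence the total number of constraints is $|\Fsw_A| + |\Fsw_B| + MN + 1 = \poly(M, N, |\Fsw_A|, |\Fsw_B|)$, establishing the general claim; computing $\Val_B$ is identical with the objective swapped to $u_B$.

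For the refined bound I would substitute the explicit comparator classes. The empty set $\emptyset$ contributes no deviation constraints; the external class $\Ee$ consists of the $M$ (resp.\ $N$) constant maps, contributing $O(M)$ (resp.\ $O(N)$) constraints; and the internal/swap class $\I$ consists of the single-swap maps $f_{k\ell}$, of which there are $O(M^2)$ (resp.\ $O(N^2)$). In every combination drawn from $\{\emptyset, \Ee, \I\}$ the comparator count is at most $O(M^2 + N^2)$, so the total constraint count remains $\poly(M, N)$, giving the second assertion. I would remark that for $\I$ one may use the standard equivalent formulation in which the swap constraints decompose per-action into $\sum_j \sigma_{ij}(u_A(a_k, b_j) - u_A(a_i, b_j)) \geq 0$ for each ordered pair $(i,k)$, confirming both the correct count and linearity.

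The routine parts are the variable count and the linearity checks, which are immediate. The one step deserving genuine care is verifying that the generalized-equilibrium conditions of Definition~\ref{def:generalized_equilibria} are faithfully captured by these finitely many linear inequalities rather than a constraint per mixed deviation; the key observation making this work is that each comparator set $\Fsw$ is finite (or its feasible region is a polytope whose finitely many vertices suffice), so maximizing a linear deviation gain over $\Fsw$ reduces to checking the extreme comparators. I would therefore flag that the main obstacle is simply articulating that $\max_{f \in \Fsw}$ in the regret definition is attained at one of the listed generators, after which the LP formulation and the polynomial bounds follow directly.
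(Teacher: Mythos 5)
Your proposal is correct and follows essentially the same route as the paper: formulate the set of $(\Fsw_A,\Fsw_B)$-equilibria as a polytope in the $MN$ variables $\sigma_{ij}$ with one linear deviation inequality per comparator in $\Fsw_A$ and $\Fsw_B$ (plus simplex constraints), maximize the linear objective $\sum_{i,j}\sigma_{ij}u_A(a_i,b_j)$, and then count $|\emptyset|=0$, $|\Ee|=M$, $|\I|=M^2$ (resp.\ $N$, $N^2$) to get the $\poly(M,N)$ bound. Your added remark that the finiteness of the comparator class is what makes the $\max_{f\in\Fsw}$ reducible to finitely many inequalities is a correct and harmless elaboration of a point the paper treats as immediate.
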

In general, these values for a player may differ under distinct notions of generalized equilibria; we give several examples of such concrete value separations in Appendix \ref{sec:GE-seps} (in Theorem \ref{thm:GE-seps}). Our results in Section \ref{sec:no-swap-eq} illustrate a particularly stark separation of this form, in which it can often be {\it dominant} to deviate to a strategy where $\Phi$-regret constraints are violated.

\section{Stability of no-swap-regret play}
\label{sec:no-swap-eq}

Here we address the following question: when is it the case that for two players in a game, it is an approximate (Nash) equilibrium for both players to play no-swap-regret strategies? More specifically, imagine a ``metagame'' where at the beginning of this repeated game, both players simultaneously announce and commit to a specific adaptive (and possibly randomized) algorithm they intend to run to select actions to play in the repeated game $G$ for the next $T$ rounds. In this metagame, for which games $G$ is it an $o(T)$-approximate Nash equilibrium for both players to play a no-swap-regret learning algorithm?

Of course, the answer to this question might depend on \textit{which} specific no-swap-regret learning algorithm the agents declare. We therefore attempt to understand the following two questions:
\begin{itemize}
    \item \textbf{Necessity:} For which games $G$ is it the case that there exists \textit{some} pair of no-swap-regret algorithms which form a $o(T)$-approximate Nash equilibrium? (Equivalently, when is it \textit{never} the case that playing no-swap-regret algorithms forms an approximate Nash equilibrium?)
    \item \textbf{Sufficiency:} For which games $G$ is it the case that \emph{all} pairs of no-swap regret algorithms form $o(T)$-approximate Nash equilibria?
\end{itemize}

A central element of our analysis will be to consider the {\it Stackelberg equilibria} of a game.
\begin{definition}[Stackelberg Equilibria] The \textup{Stackelberg equilibrium} of a game $G$ for player $A$ is the pair of strategies $(\alpha, b)$ given by $\argmax_{\alpha \in \Delta(\A),~b \in \BR(\alpha)} u_A(\alpha, b)$, and the resulting expected utility for player $A$ is the \textup{Stackelberg value} of the game, denoted $\Stack_A$. $\Stack_B$ is defined symmetrically.
\end{definition}

We can relate Stackelberg equilibria to our notions of generalized equilibria.
\begin{proposition}\label{prop:stack-0Ival}
    For any game $G$, we have that
    $\Stack_A = \Val_A(\emptyset, \mathcal{I})$.
\end{proposition}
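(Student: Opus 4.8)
The plan is to prove the equality by establishing the two inequalities $\Stack_A \le \Val_A(\emptyset, \mathcal{I})$ and $\Val_A(\emptyset, \mathcal{I}) \le \Stack_A$ separately, exploiting the fact that the pair $(\emptyset, \mathcal{I})$ places no restriction on player $A$ while imposing internal-regret (i.e.\ best-response) constraints on player $B$.

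For the lower bound, I would take a Stackelberg-optimal pair $(\alpha^*, b^*)$ with $b^* \in \BR(\alpha^*)$ and $u_A(\alpha^*, b^*) = \Stack_A$, and exhibit the distribution $\sigma = \alpha^* \times \delta_{b^*}$ (placing mass $\alpha^*_i$ on each profile $(a_i, b^*)$) as a feasible $(\emptyset, \mathcal{I})$-equilibrium. Since $\Fsw_A = \emptyset$ imposes nothing on $A$, only $B$'s swap constraints require checking; because $b^*$ is the unique action of $B$ with positive mass, each such constraint reduces to $u_B(\alpha^*, b^*) \ge u_B(\alpha^*, b')$, which holds precisely because $b^* \in \BR(\alpha^*)$. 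Hence $\sigma$ is feasible and attains $u_A(\sigma) = \Stack_A$, giving $\Val_A(\emptyset, \mathcal{I}) \ge \Stack_A$.

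The upper bound is the crux. Here I would take an arbitrary $(\emptyset, \mathcal{I})$-equilibrium $\sigma$ and decompose it by conditioning on player $B$'s realized action: writing $p_j = \sum_i \sigma(a_i, b_j)$ and letting $\alpha_j \in \Delta(\A)$ denote the conditional distribution of $A$'s action given that $B$ plays $b_j$, we have $\sigma = \sum_{j : p_j > 0} p_j (\alpha_j \times \delta_{b_j})$. The key observation is that the internal-regret constraint for $B$ on the swap $(b_j, b_{j'})$ reads $\sum_i \sigma(a_i, b_j)\,[u_B(a_i, b_j) - u_B(a_i, b_{j'})] \ge 0$, which upon dividing by $p_j$ becomes $u_B(\alpha_j, b_j) \ge u_B(\alpha_j, b_{j'})$ for every $j'$ — that is, $b_j \in \BR(\alpha_j)$. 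Thus each component pair $(\alpha_j, b_j)$ is an admissible commitment/best-response pair, so $u_A(\alpha_j, b_j) \le \Stack_A$ by definition of the Stackelberg value. Averaging yields $u_A(\sigma) = \sum_{j : p_j > 0} p_j\, u_A(\alpha_j, b_j) \le \Stack_A$, and combining with the lower bound completes the proof.

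I expect the main obstacle to be recognizing and correctly executing this conditional decomposition in the upper bound: translating the swap-regret inequality for $B$ into an \emph{exact} best-response condition on each conditional strategy $\alpha_j$, which is what makes every ``column'' of $\sigma$ an admissible Stackelberg pair whose value is dominated by $\Stack_A$. The lower-bound construction and the final averaging are then routine.
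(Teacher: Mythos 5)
Your proof is correct and follows essentially the same route as the paper: the lower bound exhibits a Stackelberg pair as a zero-swap-regret $(\emptyset,\mathcal{I})$-equilibrium, and the upper bound decomposes any $(\emptyset,\mathcal{I})$-equilibrium by conditioning on player $B$'s action, using the swap constraints to show each conditional pair $(\alpha_j, b_j)$ satisfies $b_j \in \BR(\alpha_j)$ and hence has value at most $\Stack_A$. Your write-up is actually somewhat more explicit than the paper's (which phrases the upper bound as an exchange/optimality argument over the conditional components rather than directly averaging the bound $u_A(\alpha_j,b_j)\le\Stack_A$), but the underlying decomposition is identical.
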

Here, any joint distribution over action profiles where player $B$ has zero swap regret constitutes a $(\emptyset, \mathcal{I})$-equilibrium for a game, and the optimal value for such an equilibrium for player $A$ coincides with the Stackelberg value. 
Further, each equilibrium set can be optimized over via a linear program.

Note that each value definition allows for tiebreaking in favor of player $A$. In general, simply playing the Stackelberg strategy $\alpha$ may not suffice to obtain $\Stack_A$ if the best response for Player $B$ is not unique. However, there are a number of mild conditions which are each sufficient to ensure 
the existence of an
approximate Stackelberg strategy $\alpha'$ which yields a unique best response for player $B$ and obtains $\Stack_A - \varepsilon$ for any $\varepsilon > 0$. Here we consider a minimal such condition (essentially, no action is weakly dominated without also being strictly dominated).
\begin{assumption}\label{assum:unique-stack}
In a game $G$, for each $b$, either $\BR(\alpha) = \{b\}$ for some $\alpha$, or $b \notin \BR(\alpha)$ for all $\alpha$. Likewise, for each $a$, either $\BR(\beta) = \{a\}$ for some $\beta$, or $a \notin \BR(\beta)$ for all $\beta$.
\end{assumption}

We provide an efficient algorithmic procedure to answer both questions of necessity and sufficiency for a specific game $G$ satisfying Assumption \ref{assum:unique-stack}. To do this, recall that when two players both employ no-swap regret strategies, they asymptotically (time-average) converge to some correlated equilibrium (here, corresponding to an $(\mathcal{I}, \mathcal{I})$-equilibrium). On the other hand, by defecting from playing a no-swap regret strategy (while the other player continues playing their no-swap regret strategy), a player can guarantee their Stackelberg value for the game. Moreover, as shown by \cite{DSSstrat}, this is the \emph{optimal} (up to $o(T)$ additive factors) best response to an opponent running a no-swap regret strategy. 
It thus suffices to understand how the utility a player might receive under a correlated equilibrium compares to the utility they receive under their Stackelberg strategy. For a fixed game $G$, let $\Stack_{A} = \Val_{A}(\emptyset, \mathcal{I})$ be the Stackelberg value for the first player, and $\Stack_{B} = \Val_{B}(\mathcal{I}, \emptyset)$ be the Stackelberg value for the second player. We have the following theorem.

\begin{theorem}\label{thm:swap_nash}
Fix a game $G$ satisfying Assumption \ref{assum:unique-stack}. 
The following two statements hold:
\begin{enumerate}
    \item There exists some pair of no-swap-regret algorithms that form an $o(T)$-approximate Nash equilibrium in the metagame iff there exists a correlated equilibrium $\strat$ in $G$ such that $u_{A}(\strat) = \Stack_{A}$ and $u_{B}(\strat) = \Stack_{B}$.
    \item Any pair of no-swap-regret algorithms form an $o(T)$-approximate Nash equilibrium in the metagame iff for all correlated equilibria $\strat$ in $G$, $u_{A}(\strat) = \Stack_{A}$ and $u_{B}(\strat) = \Stack_{B}$. 
\end{enumerate}
Moreover, given a game $G$, it is possible to efficiently (in polynomial time in the size of $G$) check whether each of the above cases holds. 
\end{theorem}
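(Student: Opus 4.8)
The plan is to reduce both characterizations to a comparison between correlated-equilibrium values and Stackelberg values, resting on two facts. First, when both players run no-swap-regret algorithms, Proposition~\ref{prop:F_no_regret_to_F_eq} tells us that the time-averaged joint play $\sigma_T$ is an $o(1)$-approximate correlated equilibrium (an approximate $(\mathcal{I},\mathcal{I})$-equilibrium), and the expected total rewards are $u_A(\sigma_T)\cdot T + o(T)$ and $u_B(\sigma_T)\cdot T + o(T)$. Second, I would establish that the optimal deviation against a fixed no-swap-regret opponent earns exactly the Stackelberg value: the upper bound $\Stack_A \cdot T + o(T)$ follows from Proposition~\ref{prop:gen-val-upper-bound} applied with $B$'s constraint $\mathcal{I}$ and $A$ unconstrained, since $\Val_A(\emptyset,\mathcal{I}) = \Stack_A$ (Proposition~\ref{prop:stack-0Ival}); the matching lower bound $\Stack_A \cdot T - o(T)$ comes from the fixed-strategy argument of \cite{DSSstrat}, where $A$ repeatedly plays an approximate Stackelberg strategy $\alpha'$ whose unique best response (guaranteed by Assumption~\ref{assum:unique-stack}) forces any no-external-regret---hence any no-swap-regret---learner to play that best response a $1 - o(1)$ fraction of rounds. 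The same bounds hold symmetrically for $B$ with $\Val_B(\mathcal{I},\emptyset) = \Stack_B$. A useful consequence recorded along the way is that every correlated equilibrium $\sigma$ satisfies $u_A(\sigma) \le \Stack_A$ and $u_B(\sigma) \le \Stack_B$, since $(\mathcal{I},\mathcal{I})$-equilibria form a subset of the $(\emptyset,\mathcal{I})$- and $(\mathcal{I},\emptyset)$-equilibria.

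Given these facts, each player's gain from deviating away from a no-swap-regret pair whose empirical play converges to a correlated equilibrium $\sigma$ is $(\Stack_A - u_A(\sigma))\cdot T + o(T)$ for $A$ and $(\Stack_B - u_B(\sigma))\cdot T + o(T)$ for $B$. For the ``if'' directions of both parts I would invoke Theorem~\ref{thm:targeted_equilibrium}: given a target correlated equilibrium $\sigma$, it supplies a pair of algorithms that converge to $\sigma$ yet remain no-swap-regret against arbitrary opponents, so the deviation gains above are genuinely $o(T)$ precisely when $u_A(\sigma) = \Stack_A$ and $u_B(\sigma) = \Stack_B$. This immediately yields the backward direction of Part~1, and, by contraposition applied to a witness $\sigma$ with $u_A(\sigma) < \Stack_A$ or $u_B(\sigma) < \Stack_B$, the forward direction of Part~2. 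For the forward direction of Part~1, any approximate-Nash no-swap-regret pair has empirical play $\sigma_T$; the approximate-Nash condition forces $u_A(\sigma_T) \ge \Stack_A - o(1)$ and $u_B(\sigma_T) \ge \Stack_B - o(1)$, and passing to a subsequential limit $\sigma^\star$ (compactness of the simplex, closedness of the equilibrium polytope) gives an \emph{exact} correlated equilibrium with $u_A(\sigma^\star) = \Stack_A$ and $u_B(\sigma^\star) = \Stack_B$ once combined with the universal upper bounds.

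The backward direction of Part~2 is where I expect the main obstacle, since it requires transferring the hypothesis about \emph{exact} correlated equilibria to the merely approximate empirical distributions $\sigma_T$ produced by arbitrary no-swap-regret dynamics. Here I would use a polytope-continuity argument: the set $C_\varepsilon$ of $\varepsilon$-approximate correlated equilibria is a bounded polytope converging to the exact correlated-equilibrium polytope $C_0$ in Hausdorff distance as $\varepsilon \to 0$, so the linear functionals $u_A, u_B$ satisfy $\min_{C_\varepsilon} u_A \to \min_{C_0} u_A = \Stack_A$ (and likewise for $u_B$) under the hypothesis that all exact equilibria attain the Stackelberg values. This yields $u_A(\sigma_T) = \Stack_A - o(1)$ and $u_B(\sigma_T) = \Stack_B - o(1)$ for every no-swap-regret pair, making every deviation gain $o(T)$. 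Care is needed to keep the $o(1)$ rates uniform over algorithms (they depend only on the realized average regret and the fixed game) and to pass from expected to realized payoffs by standard concentration.

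Finally, for the efficiency claim I would first compute $\Stack_A = \Val_A(\emptyset,\mathcal{I})$ and $\Stack_B = \Val_B(\mathcal{I},\emptyset)$ via the linear programs of Proposition~\ref{prop:gen-eq-lp}. The correlated-equilibrium conditions are themselves linear, so Part~1's criterion reduces to a single LP feasibility check---whether $\{\sigma \in C_0 : u_A(\sigma) \ge \Stack_A,\ u_B(\sigma) \ge \Stack_B\}$ is nonempty (the inequalities being equivalent to equalities given the upper bounds)---and Part~2's criterion reduces to verifying $\min_{\sigma \in C_0} u_A(\sigma) = \Stack_A$ and $\min_{\sigma \in C_0} u_B(\sigma) = \Stack_B$ via two further LPs, all of polynomial size in $G$.
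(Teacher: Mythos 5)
Your proposal is correct and follows essentially the same route as the paper's proof: reduce everything to comparing correlated-equilibrium payoffs with Stackelberg values via the optimal-deviation bound from \cite{DSSstrat} together with Propositions~\ref{prop:gen-val-upper-bound} and~\ref{prop:stack-0Ival}, realize target equilibria with Theorem~\ref{thm:targeted_equilibrium}, pass to subsequential limits for the forward direction of Part~1, and check the conditions by linear programming. Your explicit polytope-continuity argument for the backward direction of Part~2 (transferring the hypothesis on exact correlated equilibria to the approximate empirical distributions) is a point the paper treats only implicitly, and is a welcome extra level of care rather than a different approach.
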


We obtain both claims by leveraging the construction in Theorem \ref{thm:targeted_equilibrium}: best-case and worst-case correlated equilibria are feasible by some pair of no-swap-regret algorithms, and both players must simultaneously achieve close to their Stackelberg value for deviating to not be preferable. 
The characterization in Theorem \ref{thm:swap_nash} is algorithmically useful, but sheds little direct light on in which games or how often we would expect playing no-swap-regret to be an approximate equilibrium. It turns out that for many games, playing no-swap-regret is \textit{not} an equilibrium; below we will show that for almost all games, if $G$ does not have a pure Nash equilibrium, at least one player has an incentive to deviate to their Stackelberg strategy.

\begin{definition}
A property $P$ of a game holds for \emph{almost all} games if, given any game $G$, property $P$ holds with probability $1$ for the game $G'$ formed by starting with $G$ and perturbing each of the entries $u_{A}(a_i, b_j)$ and $u_{B}(a_i, b_j)$ by independent uniform random variables in the range $[-\varepsilon, \varepsilon]$ (for any choice of $\varepsilon$). In other words, the property holds for almost all choices of the $2MN$ utility values that define a game (with respect to the standard measure on this space). 
\end{definition}

\begin{theorem}\label{thm:pure_nash}
For almost all games $G$, if $G$ does not have a pure Nash equilibrium, then there does not exist a pair of no-swap-regret algorithms which form a $o(T)$-approximate Nash equilibrium in the metagame for $G$. 
\end{theorem}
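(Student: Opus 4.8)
The plan is to reduce to the characterization in Theorem~\ref{thm:swap_nash} and then show that, for almost all games, the existence of a correlated equilibrium attaining player $A$'s Stackelberg value already forces the game to contain a pure Nash equilibrium; taking the contrapositive yields the claim. First I would note that Assumption~\ref{assum:unique-stack} holds for almost all games (the weak-domination ties it excludes are cut out by measure-zero payoff conditions under the perturbation), so Theorem~\ref{thm:swap_nash} applies. By part~(1) of that theorem, the existence of a no-swap-regret $o(T)$-approximate Nash equilibrium implies the existence of a correlated equilibrium $\sigma$ with $u_A(\sigma) = \Stack_A$ and $u_B(\sigma) = \Stack_B$; it therefore suffices to prove that, for almost all $G$, such a $\sigma$ cannot exist unless $G$ has a pure Nash equilibrium. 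In fact I will only use the condition $u_A(\sigma) = \Stack_A$.

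The core of the argument is a decomposition of $\sigma$ according to player $B$'s action. Write $\sigma = \sum_j q_j\,(\nu_j \otimes \delta_{b_j})$, where $q_j$ is the mass $\sigma$ places on column $b_j$ and $\nu_j \in \Delta(\A)$ is the conditional law of $A$'s action given that $B$ plays $b_j$. The key observation is that $B$'s correlated-equilibrium constraint, conditioned on the recommendation $b_j$, states exactly that $b_j \in \BR(\nu_j)$. Hence each pair $(\nu_j, b_j)$ is a feasible Stackelberg commitment for $A$, so $u_A(\nu_j, b_j) \leq \Stack_A$ for every $j$ in $B$'s support. Since $u_A(\sigma) = \sum_j q_j\, u_A(\nu_j, b_j) = \Stack_A$ is a convex combination of these quantities, each term must equal $\Stack_A$; that is, every $(\nu_j, b_j)$ is an exactly optimal Stackelberg commitment.

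I would then invoke two genericity facts, each holding with probability $1$ under the perturbation: (i) the per-action optimal Stackelberg values are distinct, so there is a unique action $b^*$ that can appear as the best response in an optimal commitment; and (ii) for every column $b$, the best response $\argmax_i u_A(a_i, b)$ is unique. Fact~(i) forces $b_j = b^*$ for all $j$ in $B$'s support, so $\sigma = \nu_{b^*} \otimes \delta_{b^*}$, i.e. $B$ plays the pure action $b^*$. Now apply $A$'s correlated-equilibrium constraint: since $B$ plays $b^*$ deterministically, every action in the support of $A$'s marginal must be a best response to $b^*$, and by fact~(ii) this support is the single action $a^\dagger = \argmax_i u_A(a_i, b^*)$. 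Finally $a^\dagger \in \BR(b^*)$ by construction, while $b^* \in \BR(\nu_{b^*}) = \BR(\delta_{a^\dagger}) = \BR(a^\dagger)$, so $(a^\dagger, b^*)$ is a mutual best response, a pure Nash equilibrium. The contrapositive is precisely the theorem.

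The main obstacle I anticipate is establishing genericity fact~(i) rigorously. Unlike the best-response ties in~(ii), which are cut out by finitely many fixed hyperplanes in payoff space and are therefore immediately measure-zero, the per-action Stackelberg values are optimal values of parametric linear programs, so I would need to argue that the set of payoff matrices on which two of these values coincide (or on which an optimal commitment is non-unique in a way that enlarges $B$'s support) has Lebesgue measure zero. I expect this to follow from each such value being a piecewise-rational function of the payoffs with non-degenerate dependence, but handling the polyhedral pieces and the boundaries between them so that the measure-zero claim is airtight is the delicate part; everything downstream of~(i) and~(ii) is elementary.
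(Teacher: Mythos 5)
Your proposal is correct and shares the paper's skeleton --- reduce via Theorem~\ref{thm:swap_nash} to a correlated equilibrium attaining the Stackelberg value, disintegrate it into conditionals, observe that each conditional pair is a feasible Stackelberg commitment, and collapse to a pure profile by genericity --- but your finish is genuinely different and somewhat leaner. The paper decomposes $\sigma$ by player $A$'s action, uses \emph{both} conditions $u_A(\sigma)=\Stack_A$ and $u_B(\sigma)=\Stack_B$, and runs the argument symmetrically for each player; this requires the full uniqueness of each player's Stackelberg equilibrium, including uniqueness of the optimizing mixed commitment $\alpha_j$ within each best-response region $A_j$ (argued via uniqueness of the extremal point of a convex set in a randomly perturbed direction), and concludes that $\sigma$ is simultaneously both players' Stackelberg equilibrium. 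You instead decompose by $B$'s action, use only $u_A(\sigma)=\Stack_A$, and after pinning $B$ to the single column $b^*$ you finish with $A$'s own correlated-equilibrium constraint plus the elementary hyperplane-genericity fact that $\argmax_i u_A(a_i,b^*)$ is a singleton; you never need uniqueness of the optimal mixed commitment. The one delicate step you flag --- distinctness of the per-column values $V_j=\max\{u_A(\alpha,b_j): b_j\in\BR(\alpha)\}$ --- is exactly the paper's ``property 1,'' and the paper's dispatch of it works for you too: the regions $A_j$ depend only on $u_B$, so after conditioning on $u_B$ each $V_j$ is the support function of a fixed polytope evaluated at the independently perturbed vector $u_A(\cdot,b_j)$; since the maximizer lies in the simplex the gradient is nonzero, the pushforward of the perturbation is atomless, and ties among distinct $j$ have probability zero. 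With that supplied, your argument is airtight and trades the paper's symmetric double application for a weaker genericity requirement.
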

\begin{proof}[Sketch]
We can show that if a correlated equilibrium has the same utility for a player as their Stackelberg value (a consequence of Theorem \ref{thm:swap_nash}), then the correlated equilibrium must be a convex combination of valid Stackelberg equilibria. In almost all games, both players have unique Stackelberg equilibria (and Assumption \ref{assum:unique-stack} holds), which implies that this correlated equilibrium must actually be the Stackelberg strategy for both players simultaneously. This implies that it is a pure Nash equilibrium (since one action in a generic Stackelberg equilibrium is always pure).
\end{proof}
Note that although Theorem \ref{thm:pure_nash} holds for almost all games, there are some important classes of games (most notably, zero-sum games) in the measure zero subset omitted by this theorem statement that both a) do not have pure Nash equilibria and b) have the property that playing no-swap-regret algorithms is an approximate equilibrium in the metagame (in particular, for zero-sum games, the Stackelberg value collapses to the value of the unique Nash equilibrium). Still, Theorem \ref{thm:pure_nash} shows that there are very wide classes of games for which playing no-swap-regret algorithms is not stable from the perspective of the agents.

Finally Theorem \ref{thm:pure_nash} requires that we deviate to our Stackelberg strategy, which may be hard to compute. One can ask whether there are games where efficient deviations -- e.g., to algorithms with \textit{weaker} regret guarantees -- lead to strictly more utility for the deviating player. In Appendix \ref{app:weaker_regret} we show this is true in the following sense: there are games $G$ where $\Val_{A}(\Ee, \I) > \MinVal_{A}(\Ee, \I) = \Val_{B}(\I, \I)$. That is, in such a game player $A$ can possibly strictly increase their utility by switching to a low-external-regret strategy, and such a switch will never decrease their utility.

\section{Optimal rewards against no-regret learners}\label{sec:no-reg-seps}

Here, we characterize the feasibility of optimizing one's reward against no-(external)-regret learners in terms of generalized equilibria. In contrast to the case of no-swap-regret learners, as shown by \cite{DSSstrat} there are games in which one can obtain $\Omega(T)$ more than the Stackelberg value over $T$ rounds against certain no-regret algorithms by playing an appropriate adaptive strategy. A major remaining open question from this line of work is determining the best feasible reward and corresponding optimal strategy against no-regret agents in arbitrary games.
We resolve this question when considering the maximum over all possible no-regret algorithms: for any game, we can compute an upper bound on the feasible reward against any no-regret algorithm, and we show that there exists a specific no-regret algorithm against which we can obtain this reward via an efficiently implementable strategy.

\begin{theorem}\label{cor:exploitingnoregret}
    For any game $G$, there exists a no-regret algorithm $\LL$ and a strategy for player $A$ such that the total reward of player $A$ converges to $\Val_A {(\emptyset, \mathcal{E})} \cdot T \pm o(T)$ when player $B$ uses $\LL$.
\end{theorem}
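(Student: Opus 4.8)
The plan is to obtain this statement as a direct application of Theorem~\ref{thm:targeted_equilibrium} to the asymmetric constraint pair $(\Fsw_A, \Fsw_B) = (\emptyset, \mathcal{E})$. First I would fix $\sigma^*$ to be an optimal $(\emptyset, \mathcal{E})$-equilibrium for player $A$: a maximizer of $u_A(\sigma)$ over the set of joint distributions $\sigma \in \Delta(\A \times \B)$ satisfying the no-external-regret constraints for player $B$ (and no constraint on player $A$, since $\Fsw_A = \emptyset$). By Proposition~\ref{prop:gen-eq-lp} this feasible set is a compact polytope cut out by $\poly(M,N)$ linear inequalities, so the linear objective $u_A$ attains its maximum $\Val_A(\emptyset, \mathcal{E})$ at some efficiently computable vertex $\sigma^*$.

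Next I would verify that the hypotheses of Theorem~\ref{thm:targeted_equilibrium} hold for this pair. A no-$\mathcal{E}$-regret (i.e.\ no-external-regret) algorithm exists by standard results such as multiplicative weights, and a no-$\emptyset$-regret algorithm exists trivially, since the $\emptyset$-regret of any action sequence is vacuous (there are no comparator deviations in $\emptyset$). Applying Theorem~\ref{thm:targeted_equilibrium} with $\Psi = \sigma^*$ then yields a pair $(\LL^*_A(\sigma^*), \LL^*_B(\sigma^*))$ whose joint empirical play converges to $\sigma^*$, and, crucially, such that $\LL^*_B(\sigma^*)$ remains no-external-regret against \emph{arbitrary} adversaries.

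I would then set $\LL := \LL^*_B(\sigma^*)$ as the claimed no-regret algorithm for player $B$, and take player $A$'s strategy to be $\LL^*_A(\sigma^*)$. Because the empirical distribution of play converges to $\sigma^*$, player $A$'s average reward converges to $u_A(\sigma^*) = \Val_A(\emptyset, \mathcal{E})$; and since $\LL$ preserves its external-regret bound no matter how $A$ actually behaves, it is a legitimate no-regret algorithm as required. Pairing this with the upper bound of Proposition~\ref{prop:gen-val-upper-bound} (which, with $\Fsw_A = \emptyset$, caps any strategy of $A$ against a no-external-regret $B$ at $\Val_A(\emptyset, \mathcal{E}) + o(1)$) confirms that the value is exactly tight, matching the characterization advertised in the introduction. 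Efficiency follows since $\sigma^*$ is computed by an LP of polynomial size and the schedule-plus-regret-tracking construction underlying Theorem~\ref{thm:targeted_equilibrium} is efficient given $\sigma^*$.

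The step carrying the real weight is not in this argument but inside Theorem~\ref{thm:targeted_equilibrium}: namely the guarantee that $\LL^*_B(\sigma^*)$ can simultaneously (a) cooperate to steer the empirical play toward the $A$-favorable equilibrium $\sigma^*$ and (b) retain its external-regret bound even against an adversarial $A$. The genuinely new checks here are lightweight: that $(\emptyset, \mathcal{E})$ is a valid asymmetric instance of the framework — in particular that the vacuous $\emptyset$-constraint imposes no obligation on $A$, freeing $A$ to follow the schedule dictated by $\LL^*_A(\sigma^*)$ — and that the optimum $\sigma^*$ is attained and efficiently computable so both algorithms can be realized. I expect no substantive obstacle beyond correctly instantiating the earlier machinery.
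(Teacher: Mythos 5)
Your proposal is correct and follows essentially the same route as the paper's proof: instantiate Theorem~\ref{thm:targeted_equilibrium} at the optimal $(\emptyset, \mathcal{E})$-equilibrium, take $\LL$ to be the resulting no-external-regret algorithm for player $B$, and invoke Propositions~\ref{prop:gen-val-upper-bound} and~\ref{prop:gen-eq-lp} for tightness and efficiency. No discrepancies to note.
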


Here we again make use of the construction from Theorem \ref{thm:targeted_equilibrium}. Note that by Proposition \ref{prop:gen-val-upper-bound}, this is optimal over all no-external-regret algorithms, as any adaptive strategy constitutes a no-$\emptyset$-regret algorithm. By Proposition \ref{prop:gen-eq-lp} we can identify the optimal $(\emptyset, \mathcal{E})$-equilibrium in $\poly(M,N)$ time, which is sufficient to implement the algorithm $\LL$ as well as our own strategy efficiently.

However, we additionally show that this bound is often unattainable against many standard no-external-regret algorithms. A property of many such algorithms (including Multiplicative Weights, Follow the Perturbed Leader, and Exp-3) is that they are {\it mean-based}, as formulated by \cite{BMSWselling}.

\begin{definition}[Mean-based learning] Let $\sigma_{i,t}$ be the cumulative reward resulting from playing action $i$ for the first $t$ rounds. An algorithm $\LL$ is \textup{$\gamma$-mean-based} if, whenever $\sigma_{i,t} \leq \sigma_{j,t} - \gamma T$, the probability that the algorithm selects action $i$ in round $t+1$ is at most $\gamma$, 
for some $\gamma = o(1)$.
\end{definition}

These algorithms resemble ``smoothed'' variants of Follow the Leader; they only play actions with probability higher than $o(1)$ if their cumulative reward thus far is not too far from optimal, and hence never play dominated strategies. However, in general, $(\emptyset, \mathcal{E})$-equilibria may contain {\it dominated} strategies, as is also the case for coarse correlated equilibria. This allows us to show the following. %theorem.

\begin{theorem}\label{thm:eq-mb-sep}
    Against any mean-based no-regret algorithm for player $B$, there are games where a $T$-round reward of $(\Val_A(\emptyset, \mathcal{I}) + \varepsilon)\cdot T$ cannot be reached by any adaptive strategy for player $A$, for any $\epsilon > 0$. However, for this same game, $\Val_{A}(\emptyset, \Ee) > \Val_{A}(\emptyset, \I)$.
\end{theorem}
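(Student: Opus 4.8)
The plan is to exploit the one structural difference between coarse-correlated equilibria and the swap-regret/Stackelberg value: a $(\emptyset,\Ee)$-equilibrium may place constant weight on a \emph{strictly dominated} action of $B$ (because $B$'s only admissible deviation is to a single fixed column, evaluated against $A$'s full marginal), whereas a $(\emptyset,\mathcal{I})$-equilibrium cannot, and the latter's value is exactly $\Stack_A$ by Proposition~\ref{prop:stack-0Ival}. Since a $\gamma$-mean-based learner never plays a strictly dominated action more than $o(T)$ times, it is effectively confined to the dominated-action-free part of the game. So I would engineer a game in which \emph{all} of $A$'s advantage over Stackelberg is carried by a single strictly dominated column of $B$, and $A$ earns nothing otherwise.

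Concretely, I would take $\A=\{a_1,a_2\}$ and $\B=\{b_1,b_2,b^*\}$ with $u_B(a_1,b_1)=0$, $u_B(a_2,b_1)=1$, $u_B(a_1,b_2)=1$, $u_B(a_2,b_2)=0$, $u_B(a_1,b^*)=-1$, $u_B(a_2,b^*)=0$, so that $b^*$ is strictly dominated by $b_1$ with a uniform gap of $1$ in each column, while $b_1$ and $b_2$ are mutually undominated; and I would set $u_A\equiv 0$ on the columns $b_1,b_2$ with $u_A(a_2,b^*)=2$ (and $u_A(a_1,b^*)$ an arbitrary bounded constant). For the separation, I would exhibit the coarse-correlated equilibrium $\sigma=\tfrac12(a_1,b_2)+\tfrac12(a_2,b^*)$ (or weight $q$ slightly above $\tfrac12$ for strict feasibility): a short check shows $B$ cannot profit by deviating to any fixed column, since the surplus of $b_2$ against $a_1$ exactly offsets the loss incurred by playing $b^*$, so $\sigma$ is a valid $(\emptyset,\Ee)$-equilibrium with $u_A(\sigma)=\tfrac12\,u_A(a_2,b^*)=1>0$. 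Meanwhile $\Stack_A=\Val_A(\emptyset,\mathcal{I})=0$: for any committed $\alpha$, player $B$ best-responds with $b_1$ or $b_2$ (never the dominated $b^*$, even under tiebreaking in $A$'s favor), and $u_A$ vanishes on those columns. Hence $\Val_A(\emptyset,\Ee)\ge 1>0=\Val_A(\emptyset,\mathcal{I})$.

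For the mean-based upper bound I would argue pathwise. Because the domination gap is uniform across $A$'s actions, for \emph{every} realized sequence of $A$'s play we have $\sigma_{b^*,t}\le\sigma_{b_1,t}-t\le \max_j \sigma_{j,t}-t$; thus once $t\ge\gamma T$ the $\gamma$-mean-based condition forces $b^*$ to be played with probability at most $\gamma$. Summing over rounds, $b^*$ is played only $O(\gamma T)=o(T)$ times in expectation, regardless of how $A$ adapts to $B$'s realized actions. Since $u_A=0$ on all other columns, $A$'s total reward is at most $O(1)\cdot o(T)=o(T)=\Val_A(\emptyset,\mathcal{I})\cdot T+o(T)$, so a reward of $(\Val_A(\emptyset,\mathcal{I})+\varepsilon)T$ is unreachable for every $\varepsilon>0$ and every mean-based learner.

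I expect the main obstacle to be the joint design in the second paragraph rather than either verification: forcing a strictly dominated action into a coarse-correlated equilibrium requires the compensating column $b_2$ together with a payoff balance that makes $B$'s coarse constraint for the deviation to $b_1$ exactly tight, all while pinning $\Stack_A$ at $0$. Choosing $u_A\equiv 0$ off $b^*$ is precisely what trivializes the mean-based analysis (the optimizer's only source of value is an action the learner refuses to play); the remaining care is to ensure the domination is strict and \emph{uniform} over $A$'s actions, which is exactly what makes the $o(T)$ bound on the play of $b^*$ hold pathwise and hence robust to an adaptive, game-aware optimizer.
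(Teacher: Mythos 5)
Your proposal is correct and follows essentially the same route as the paper: both construct a small game in which all of player $A$'s surplus over $\Val_A(\emptyset,\mathcal{I})$ is carried by a strictly dominated column of $B$, verify via the coarse-correlated (external-regret) constraints that the optimal $(\emptyset,\Ee)$-equilibrium places constant mass on that column, and then argue directly from the mean-based definition that such a column is played only $o(T)$ times, capping $A$'s reward at $\Val_A(\emptyset,\mathcal{I})\cdot T + o(T)$. The only differences are cosmetic (the paper's instance has $\Val_A(\emptyset,\mathcal{I})=1$ rather than $0$, and it packages the dominated-action argument as a general lemma on ``dominated-swapping external regret''), so no further comparison is needed.
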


In the appendix, we introduce a notion of ``dominated-swapping external regret'' which we use to characterize a class of games for which this holds, and we give a concrete example of such a game.

\section{Learning Stackelberg equilibria in unknown games}\label{sec:learning-stack}

Our results thus far have highlighted the primacy of the Stackelberg reward as an objective for repeated play against a learner: it is optimal against a no-swap learner and can sometimes be optimal against a mean-based learner, and it is almost always attainable against any learner. However, until now our strategies have assumed knowledge of the entire game, which may be unrealistic in many settings for which learning in games is relevant, particularly in terms of our opponent's rewards.

Here, we consider the challenge of learning the Stackelberg strategy via repeated play against a no-regret learner when only our own rewards are known, which is unaddressed in the literature to our knowledge; much of the prior work on learning Stackelberg equilibria assumes a {\it query} model, where one can observe the best response $\BR(\alpha)$ played by an opponent for any queried mixed strategy $\alpha$. 
While here we cannot immediately observe the best response of an opponent, as their actions are selected by a learning algorithm which may be slow to adapt to changes in our behavior, we give a reduction from query algorithms of this form to strategies for choosing our actions which enable us to {\it simulate} queries to $\BR(\alpha)$ against a learner, and we analyze the efficiency of this approach (in terms of rounds required for learning) under differing assumptions on the learner's algorithm.

For comparison of behavior across time horizons of varying lengths, it will be convenient for us to consider the notion of an {\it anytime} regret bound, which can be obtained from any base no-regret algorithm via doubling methods, as well as often via learning rate decay.

\begin{definition}[Anytime regret algorithms] An algorithm is an \textup{anytime no-$\Phi$-regret} algorithm if satisfies $\Reg_{\Phi}(t) = O(t^c)$ over the first $t$ rounds, for some $c < 1$ and any $t \leq T$.
\end{definition}
We also recall the notion of adaptive regret; many no-external-regret algorithms such as Online Gradient Descent satisfy no-adaptive-regret bounds (see e.g.\ \cite{Zhang2017DynamicRO}). 
\begin{definition}[Adaptive regret algorithms]
An algorithm $\LL$ for player $B$ is a \textup{no-adaptive-$\Phi$-regret} algorithm if $\sup_{r, s \in [T]}\Reg_{\Phi}(\LL, [r,s]) \leq O(T^c)$, for some $c < 1$, where $\Reg_{\Phi}(\LL, [r,s]) = \max_{f \in \Phi} \sum_{t=r}^s u_B(a_{i_t}, f(b_{j_t})) - u_B(a_{i_t}, b_{j_t})$.
\end{definition}
A key distinction between adaptive-regret algorithms like OGD and mean-based algorithms like FTPL is in in their ``forgetfulness'', and hence their ability to quickly adapt when rewards change. This has stark implications for the efficiency of learning Stackelberg equilibria, which we show can take {\it exponentially} longer against mean-based algorithms. As shown by \cite{Zhang2017DynamicRO}, adaptive regret is closely connected with dynamic regret; we note that our results for adaptive-regret learners can also be extended to dynamic-regret learners.

\subsection{Simulating query algorithms}\label{subsec:robust-query}
Our approach will be to compute an $\varepsilon$-approximate Stackelberg strategy by simulating best response queries against a learner, after which point we can obtain an average reward approaching $\Stack_A - \varepsilon$ in each subsequent round, calibrating $\varepsilon$ in terms of $T$ as desired.  
The query complexity for such algorithms can depend on the geometry of the best response regions of the game, and unfortunately, as shown by \cite{PSWZlearnstack}, there are ``hard'' game instances which require exponentially many queries. This issue arises when the best response regions may be quite small but non-empty, as even finding a point in each region is information-theoretically difficult. We restrict our attention to games in which this does not occur, and fortunately efficient query do algorithms exist for this case.
\begin{assumption}\label{assum:BR-volume}
For a game $G$ and any action $b_i$, we have that
\begin{align*}
    \Pr_{\alpha \sim \Unif(\Delta(M))}\brackets{ b_i \in \BR(\alpha) } \in \{0\} \cup [1/\poly(\varepsilon^{-1}), 1],
\end{align*}
i.e.\ the volume of each $\BR$ region is either 0 or inverse polynomially large.
\end{assumption}

\begin{proposition}[\cite{Letchford2009LearningAA,PSWZlearnstack}]
For a game $G$ satisfying Assumption \ref{assum:BR-volume}, there is an algorithm which finds an $\varepsilon$-approximate Stackelberg strategy for player $A$ with $Q=\poly(M,N,1/\varepsilon)$ queries to $\BR(\alpha)$.
\end{proposition}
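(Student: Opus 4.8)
The plan is to recover enough of the best-response partition of the simplex $\Delta(\A)$ to reduce player $A$'s commitment problem to a small collection of linear programs. Recall that player $A$ knows its own payoffs $u_A$, and that $\Delta(\A)$ is partitioned into the best-response regions $R_j = \{\alpha \in \Delta(\A) : b_j \in \BR(\alpha)\}$, each a convex polytope cut out by the linear inequalities $u_B(\alpha, b_j) \geq u_B(\alpha, b_k)$ over $k \neq j$. If we knew each nonempty $R_j$ exactly, the Stackelberg value would be $\max_j \max_{\alpha \in R_j} u_A(\alpha, b_j)$, a maximum of $N$ linear programs over known polytopes. So it suffices to reconstruct the regions and their bounding hyperplanes to sufficient precision using best-response queries.

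First I would find an interior point in every nonempty region. This is precisely the step that is information-theoretically impossible without a volume bound, and is where Assumption~\ref{assum:BR-volume} is used: sampling $\alpha \sim \Unif(\Delta(M))$ and querying $\BR(\alpha)$ lands in any fixed nonempty $R_j$ with probability at least $1/\poly(\varepsilon^{-1})$, and returns a unique label with probability $1$ since ties occur only on a measure-zero set of hyperplanes. A union bound over the at most $N$ regions then shows that $\poly(N, \varepsilon^{-1})$ random queries suffice to obtain a representative point $\alpha_j$ in each nonempty region with high probability.

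Next I would reconstruct the separating hyperplanes. The boundary shared by adjacent regions $R_j$ and $R_k$ lies on $H_{jk} = \{\alpha : u_B(\alpha, b_j) = u_B(\alpha, b_k)\}$. Binary-searching $\BR(\cdot)$ along the segment from $\alpha_j$ to $\alpha_k$ locates a point on the first boundary crossed to precision $\delta$ in $O(\log(1/\delta))$ queries, and since the oracle returns the label on the far side, we learn which hyperplane that facet lies on. Repeating along $O(M)$ affinely independent segments between (perturbed) interior points of the two regions yields enough boundary points to solve a linear system for the normal of $H_{jk}$, and hence an approximate inequality description of each polytope $R_j$. This costs $\poly(M, N, \log(1/\delta))$ queries in total.

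Finally, I would solve the $N$ linear programs $\max_{\alpha \in R_j} u_A(\alpha, b_j)$ over the reconstructed (and slightly shrunk, to force a strict best response in the spirit of Assumption~\ref{assum:unique-stack}) regions and return the maximizing strategy. The main obstacle is the accuracy analysis tying the reconstruction precision $\delta$ to the target $\varepsilon$: because an LP optimum can in principle be sensitive to perturbations of its constraints, one must argue that the reconstructed polytopes are close enough, in Hausdorff distance, that the optimal commitment value changes by at most $\varepsilon$, using boundedness of the utilities together with the fact that Assumption~\ref{assum:BR-volume} forces each region to be ``fat'' (to contain a ball of inverse-polynomial radius), which controls the relevant condition number. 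Choosing $\delta = \poly(\varepsilon, M^{-1}, N^{-1})$, so that each binary search costs $O(\log(MN/\varepsilon))$ queries, keeps the total at $Q = \poly(M, N, 1/\varepsilon)$.
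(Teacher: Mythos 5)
Your sketch is essentially the algorithm this proposition imports from the cited works: the paper does not reprove it, but its own description of the $\mathsf{SU}$ procedure (random sampling under Assumption~\ref{assum:BR-volume} to seed each nonempty best-response region, binary/line searches between seeds to recover the separating hyperplanes, then optimizing $u_A$ over the reconstructed regions) is exactly your three-step plan. The one place you correctly flag but do not fully discharge---relating the hyperplane-reconstruction precision $\delta$ to the final $\varepsilon$ via the ``fatness'' of the regions---is precisely the part handled in the cited references, so your proposal matches the intended proof.
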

We note that while such algorithms can indeed obtain tighter approximation guarantees in terms of $\varepsilon$ (e.g. $O(\log(1/\varepsilon))$), the query complexity is still inverse polynomially related to the best response region volumes; we consider only $\varepsilon$-approximate equilibria due to challenges which are inherent to the no-regret learning setting, as the precision with which we can simulate a query is constrained by our time horizon. The key to our approach is to play according to a mixed strategy $\alpha$ until it saturates the relevant window of the learner's history, which induces them to play a best response. Against no-adaptive-regret learners, a best response will be induced quickly, as their regret is bounded even over small windows. However, for arbitrary no-regret learners, we have no promises other than the cumulative regret bound, which may require saturating the entire history for each query.
\begin{theorem}\label{thm:query-sim}
Suppose $\mathcal{E} \subseteq \Phi$. For a game satisfying Assumption \ref{assum:BR-volume}, there is an algorithm which finds an $\varepsilon$-approximate Stackelberg strategy in $\poly(1/\varepsilon)^{Q}$ rounds against any anytime-no-$\Phi$-regret learner, and in $\poly(Q/\varepsilon)$ rounds against any no-adaptive-$\Phi$-regret calibrated for $T = \Theta(\poly(Q/\varepsilon))$,
where $Q=\poly(M,N,1/\varepsilon)$.
\end{theorem}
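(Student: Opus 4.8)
The plan is to reduce the problem to the query algorithm of the preceding Proposition and to argue that each best-response query $\BR(\alpha)$ can be \emph{simulated} by repeatedly playing the fixed mixed strategy $\alpha$ for enough rounds to force the learner into playing a best response, with the round-cost of a simulated query depending on how forgetful the learner's algorithm is. Since by assumption $\mathcal{E} \subseteq \Phi$, any no-$\Phi$-regret learner is in particular no-external-regret, so once the strategy $\alpha$ has been played sufficiently long relative to the learner's history, the learner's external regret bound forces its empirical play to concentrate on $\BR(\alpha)$ up to the approximation tolerance. I would first make this ``saturation'' argument quantitative: if $\alpha$ is played for a block of $w$ consecutive rounds that constitutes all but a negligible fraction of the learner's relevant history, then the average reward the learner could have obtained from a fixed best deviation against the empirical distribution of my play (which is $\approx \alpha$ on that block) exceeds what it actually gets by more than the regret bound unless it plays (an $\varepsilon$-approximate) best response on most of the block. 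Thus observing the learner's realized action frequencies over the block reveals $\BR(\alpha)$ to within error $\varepsilon$, which is exactly what the query model needs.

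The two regimes then come from how large a window $w$ is required. For a \textbf{no-adaptive-$\Phi$-regret} learner, the defining property is that $\Reg_\Phi(\LL,[r,s]) = O(T^c)$ over \emph{every} subinterval $[r,s]$, so a block of length $w = \poly(1/\varepsilon)$ already guarantees a best response is induced within that block regardless of what came before it; crucially, the cost of each query does \emph{not} compound, because the learner ``forgets'' earlier queries. Summing over the $Q$ queries and calibrating the horizon to $T = \Theta(\poly(Q/\varepsilon))$ so that the per-window regret term $T^c$ is itself small relative to $\varepsilon w$, the total number of rounds is $\poly(Q/\varepsilon)$. For an \textbf{arbitrary anytime-no-$\Phi$-regret} learner, I have no per-window guarantee — only the cumulative (anytime) bound $\Reg_\Phi(t) = O(t^c)$ — so to make the current query's block dominate the accumulated history I must play $\alpha$ for a number of rounds comparable to the \emph{entire history so far}. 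This forces the block length (and hence the running total) to grow by a multiplicative factor of $\poly(1/\varepsilon)$ with each successive query, yielding the geometric blow-up $\poly(1/\varepsilon)^Q$ rounds after $Q$ queries. Here the anytime property is what lets me invoke the regret bound at each intermediate prefix length rather than only at the final horizon.

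The \textbf{main obstacle} I expect is making the saturation argument robust to the \emph{adaptivity} of the learner and to the gap between the cumulative regret bound and the per-query guarantee I need. Specifically, because the learner's algorithm is arbitrary within its regret class, I cannot assume it responds to a constant strategy by converging monotonically; I only know that its \emph{aggregate} deviation regret is bounded. The delicate step is to convert that aggregate bound into a statement that the learner plays an $\varepsilon$-approximate best response on a $(1-o(1))$ fraction of rounds within the saturating window, and then to ensure that inspecting realized frequencies over that window recovers $\BR(\alpha)$ correctly even though a small fraction of rounds may be arbitrary. Under Assumption~\ref{assum:BR-volume}, the best-response regions are either empty or inverse-polynomially large, so an $\varepsilon$-approximate best response over a window of the stated size suffices to identify the region unambiguously, and feeding these simulated answers into the $\poly(M,N,1/\varepsilon)$-query algorithm of the preceding Proposition recovers an $\varepsilon$-approximate Stackelberg strategy. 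Combining the per-query round costs across the $Q$ queries in each regime gives the claimed $\poly(1/\varepsilon)^Q$ and $\poly(Q/\varepsilon)$ bounds respectively.
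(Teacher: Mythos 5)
Your proposal follows the same route as the paper's proof: simulate each best-response query by saturating the learner's (relevant) history with the query strategy, read off $\BR(\alpha)$ from which action dominates the learner's play, and feed the answers into the query algorithm from the preceding proposition; the anytime case forces each new block to outweigh the entire accumulated history (hence a multiplicative $\poly(1/\varepsilon)$ blow-up per query and $\poly(1/\varepsilon)^Q$ total), while the adaptive-regret case only needs a window of length $\omega(T^c)$ whose guarantee is independent of the past (hence $\poly(Q/\varepsilon)$ total, with the horizon calibrated so that $Q$ such windows fit). The one mechanical difference is that the paper derandomizes the block by greedily playing the action whose historical frequency is furthest below its target in the query, so the relevant empirical distribution provably stays in a small ball around the query point, whereas you sample from $\alpha$ i.i.d.; either works.

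There is, however, one concrete gap. You invoke Assumption~\ref{assum:BR-volume} to claim that an approximate best response over the window "identifies the region unambiguously," but that only covers the random-sampling phase of the query algorithm. The effective query is the learner's whole-history (or whole-window) empirical distribution of your play, which is within $1/\poly(\varepsilon^{-1})$ of the intended $\alpha$ but not equal to it, and the line-search phase of the query algorithm deliberately probes points arbitrarily close to best-response boundaries --- exactly where such a perturbation can flip the answer regardless of how large the regions are. So the underlying query algorithm must be shown to tolerate adversarially perturbed queries, which is not automatic. The paper handles this by snapping all queries to a grid of spacing $1/\poly(\varepsilon^{-1})$, running each line search one extra step, and taking each hyperplane estimate one grid step inward so that all points of uncertainty lie in a buffer avoided by the final optimization; this costs only a constant factor in the number of queries and $1/\poly(\varepsilon^{-1})$ in approximation quality. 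Without some argument of this kind, the reduction from repeated play to the query model is incomplete.
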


\subsection{Efficiency separations for mean-based and no-swap algorithms} 
We show here that the exponential dependence for {mean-based} algorithms is necessary: there exist games where learning the Stackelberg strategy {\it requires} exponentially many rounds against a particular mean-based algorithm. However, for the games we construct, we show that it is still possible to efficiently learn the Stackelberg strategy against a no-{\it swap}-regret learner.

\begin{theorem}\label{thm:swap-mb-game-example}
There is a distribution over games $\D$ such that for a sampled game $G$:
\begin{itemize}
    \item For any no-swap-regret learner used by the opponent, there is a strategy for the leader which yields an average reward of $\Stack_A - \varepsilon$ in $T = \poly({M/\varepsilon})$ rounds.
    \item There is a mean-based no-regret algorithm such that, when used by the opponent, there is no strategy for the leader which yields an average reward of $\Stack_A - \varepsilon$ over $T$ rounds unless $T = \exp(\Omega(M))$.
\end{itemize}
\end{theorem}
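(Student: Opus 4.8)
The plan is to exhibit a family of games in which the leader knows only its own payoff matrix $u_A$ and must recover enough of the follower's best-response structure to play an (approximate) Stackelberg strategy; the distribution $\D$ randomizes the follower's payoffs $u_B$ so that the relevant structure is a planted parameter of description length $\Theta(M)$. Concretely, I would design $G$ so that (i) Assumption \ref{assum:BR-volume} holds, so best-response regions have inverse-polynomial volume; (ii) achieving $\Stack_A$ requires inducing the follower to play one specific action $b^\star$, whose preimage region $\{\alpha : b^\star \in \BR(\alpha)\}$ is determined by the planted parameter; and (iii) the leader's own payoffs reveal nothing about where this region sits, so the leader must learn it through the follower's observed responses. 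The two bullets then become an upper bound (learn the planted parameter quickly against any no-swap learner) and a lower bound (a specific mean-based learner forces the leader to search, at exponential cost). The entire separation hinges on the contrast between a \emph{per-action} guarantee (low swap regret) and a \emph{global-average} guarantee (mean-based).

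For the no-swap upper bound, the key is that low swap (internal) regret is a per-action condition: for every follower action $b_j$ played with non-negligible frequency, $b_j$ is an approximate best response to the \emph{conditional} distribution of the leader's play on the rounds where $b_j$ was played, which is exactly the follower-side correlated-equilibrium condition of Proposition \ref{prop:F_no_regret_to_F_eq}. I would use this to recover the best-response partition directly: the leader runs a polynomial-length probing schedule and reads the planted parameter off of the follower's conditional best responses, which any no-swap learner must approximately obey regardless of its internal workings. Note that the generic simulation of Theorem \ref{thm:query-sim} gives only $\poly(1/\varepsilon)^{Q} = \exp(M)$ rounds against arbitrary no-swap learners, so the argument must exploit the structure of $G$; the gain is that the conditional guarantee is decoupled from the global history, so the probes do not suffer from history ``stickiness.'' Combined with Assumption \ref{assum:BR-volume} and the polynomial-query Stackelberg procedure, $\poly(M/\varepsilon)$ rounds suffice to identify an $\varepsilon$-approximate Stackelberg strategy, which the leader then plays to obtain average reward $\Stack_A - \varepsilon$.

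For the mean-based lower bound, I would first reduce the learner's behavior to a function of the leader's time-averaged strategy: up to the $\gamma$-slack in the mean-based definition, a mean-based learner at round $t$ plays an action in $\BR(\bar\alpha_t)$, where $\bar\alpha_t$ is the empirical average of the leader's play through round $t$. The leader's only lever over the follower is therefore $\bar\alpha_t$, which is sticky: moving the average out of one best-response region and into another after $t$ rounds requires $\Omega(t)$ further rounds, since a single round shifts $\bar\alpha_t$ by only $O(1/t)$. I would choose the specific mean-based algorithm (a fixed tie-breaking rule that is a valid $\gamma$-mean-based no-regret algorithm) so that the only informative signal about the planted parameter is which region $\bar\alpha_t$ currently occupies, and so that extracting the $\Theta(M)$ planted bits forces $\bar\alpha_t$ to visit $\Theta(M)$ successive regions. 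Combining stickiness (each visit must dominate the accumulated history, giving geometric growth $1, 2, 4, \dots$ in the per-phase lengths) with a Yao-style average-case bound over the random planting (no adaptive, $u_B$-oblivious leader strategy learns more than a bounded amount of information per region crossing) yields that reaching the region for $b^\star$, hence obtaining $\Stack_A - \varepsilon$, requires $T = \exp(\Omega(M))$ rounds with high probability over $G \sim \D$.

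The main obstacle I expect is the lower bound, specifically making these three ingredients fit together rigorously against \emph{every} adaptive and randomized leader strategy. The reduction to $\bar\alpha_t$ must propagate the $\gamma = o(1)$ slack carefully so that it does not wash out the $\Theta(1/t)$ per-round movement of the average; the stickiness-to-geometric-growth step must be robust to the leader interleaving partial, non-committal moves rather than clean phase transitions; and the averaging argument must show that adaptivity genuinely does not help, i.e.\ that the leader learns at most a bounded amount about the planting per sticky crossing. Verifying that the chosen algorithm is simultaneously $\gamma$-mean-based, no-regret, and maximally obstructive, while the \emph{same} game remains easy for the conditional-best-response argument of the no-swap upper bound, is the delicate balance the construction must achieve.
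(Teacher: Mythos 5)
Your high-level plan matches the paper's: both bullets are proved by contrasting the per-action nature of swap regret (which lets the leader probe each follower response in a polynomially long window) with the global-history dependence of mean-based algorithms (which forces each new observation to multiply the elapsed time by a constant factor, hence $\exp(\Omega(M))$ total). The paper's distribution $\D$ is exactly a planted-parameter family: $M$ leader actions, $M$ primary and $M$ secondary follower actions forming a coordination-like structure, with the planted bit-vector being which primary actions are dominated; the Stackelberg pair is the best undominated coordinate, and making $\B_U$ a uniformly random singleton forces the leader to observe $\BR(a_j)$ for every $j$.

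However, there is a genuine gap in your mean-based lower bound as stated: stickiness of the empirical average $\bar\alpha_t$ is not by itself enough. In a plain planted coordination game the mean-based learner's response is governed by the \emph{argmax} coordinate of the history, and the leader can rotate the argmax through all $M$ coordinates in polynomially many rounds (play $a_1$ for $n$ rounds, $a_2$ for $n+1$, etc.), observing every best response cheaply; no constant-separation region crossings are forced. The paper closes this with $\binom{M}{2}$ extra \emph{safety} actions $y_{jk}$ paying the follower $2/3$ against both $a_j$ and $a_k$ (and $0$ to the leader), so that $b_j$'s cumulative reward exceeds $y_{jk}$'s only when $\rho_{j,t} \geq 2\rho_{k,t} - \tilde O(\sqrt t)$, i.e.\ the history must be concentrated at least $2/3$ on $a_j$ before $\BR(a_j)$ is ever revealed. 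That domination requirement, not region adjacency, is what yields the $(3/2)^M$ growth, and your proposal names no mechanism that would force it. Separately, for the no-swap upper bound, the asymptotic conditional-best-response property of correlated equilibria does not directly give a finite-time $\poly(M/\varepsilon)$ bound against an adversarial-within-its-budget no-swap learner; the paper instead does an explicit accounting of the swap-regret ``buffer'' $\SR_t(b,\BR(a_i))$, bounding how many rounds each non-best-response (including each safety action, whose buffer can go as low as $-2\tau$) can be played within a window before the regret bound forbids it, summing to $O(T^c M^3)$ rounds overall. Both of these concrete ingredients would need to be supplied for your outline to become a proof.
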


Our construction includes a set of actions for player $B$ which are best responses to pure actions from player $A$, and one such pure strategy pair will necessarily constitute the Stackelberg equilibrium; identifying  each best response suffices for player $A$ to identify the Stackelberg strategy. The game also includes a number of {\it safety} actions for player $B$, which yield no reward for player $A$ with any strategy, yet allow player $B$ to ``hedge'' between multiple actions of player $A$. This poses a barrier to optimizing against a mean-based learner: the history must be heavily concentrated on a single action to observe the best response, and as such the history length must grow by a constant factor for each observation.
However, against a no-swap-regret learner, it suffices for the optimizer to only play each action for a polynomially long window in order to identify the learner's best response; we track the accumulation of a ``swap-regret buffer'' for any other action and show that it cannot be too large, limiting the number of rounds it can be played when it is not a current best response.

\bibliographystyle{plain}
\bibliography{ref}

\clearpage
\appendix

\section{Properties and separations for generalized equilibria}
\label{sec:GE-seps}

\subsection{Proof of Proposition \ref{prop:gen-val-upper-bound} }
\begin{proof}
    The set of $(\Fsw_A, \Fsw_B)$-equilibria includes all strategy profile distributions in which both constraints are satisfied. If a player receives substantially more or less than the corresponding value, this would imply a violation of the regret constraints for at least one of the players' learning algorithms.
\end{proof}

\subsection{Proof of Proposition \ref{prop:F_no_regret_to_F_eq}}
\begin{proof}
The statement follows by observing that
\begin{align*}
\begin{split}
 \E_{(a, b) \sim \strat}\left[u_{\{A, B\}}\left(a, b\right)\right] &= \frac{1}{T}\sum_{t = 1}^T \E_{(a, b) \sim \strat^t}\left[u_{\{A, B\}}\left(a, b\right)\right]
 \\
 \E_{(a, b) \sim \strat}\left[u_A\left(f_A(a), b\right)\right] &= \frac{1}{T}\sum_{t = 1}^T \E_{(a, b) \sim\strat^t}\left[u_{A}\left(f_A(a), b\right)\right]
 \\
 \E_{(a, b) \sim \strat}\left[u_B\left(a, f_B(b)\right)\right] &= \frac{1}{T}\sum_{t = 1}^T \E_{(a, b) \sim\strat^t}\left[u_{B}\left(a, f_B(b)\right)\right]
\end{split}
\end{align*}
which in turn are equivalent to the time-averaged utility of the play of players $A$ and $B$, the time-averaged utility for player $A$ under a deviation $f_A$, and the time-averaged utility for player $B$ under a deviation $f_B$. Applying the definition of average $\Fsw$-regret and applying the given bounds on the $\Fsw$-regret yields the conclusion of the first direction.
The reverse direction follows by reversing the steps.
\end{proof}

\subsection{Proof of Proposition \ref{prop:stack-0Ival}}
\begin{proof}
Observe that under any strategy $(\alpha, b)$ where $b \in \BR(\alpha)$, player $B$ cannot have any swap-regret, and so any Stackelberg equilibrium is also a $(\emptyset, \mathcal{I})$-equilibrium. Further, the marginal distributions over the optimal $(\emptyset, \mathcal{I})$-equilibrium for player $A$ over each $b_i$ cannot have distinct expected value for player $A$, as otherwise this would contradict optimality, and so an optimal $(\emptyset, \mathcal{I})$-equilibrium is either a single Stackelberg equilibrium or a mixture of Stackelberg equilibria with equal value.
\end{proof}

\subsection{Proof of Proposition \ref{prop:gen-eq-lp}}
\begin{proof}
By definition, the set of $(\Fsw_A, \Fsw_B)$-equilibria $\strat$ is a sub-polytope of $\Delta(\mathcal{A}\times\mathcal{B})$ defined via the following linear constraints:

\begin{itemize}
    \item For each $f_A \in \Fsw_A$, we have that

    $$\sum_{i \in [M]}\sum_{j \in [N]} \strat_{ij}u_{A}(a_i, b_j) \geq \sum_{i \in [M]}\sum_{j \in [N]} \strat_{ij}u_{A}(a_{f(i)}, b_j).$$
    \item For each $f_B \in \Fsw_B$, we have that

    $$\sum_{i \in [M]}\sum_{j \in [N]} \strat_{ij}u_{B}(a_i, b_j) \geq \sum_{i \in [M]}\sum_{j \in [N]} \strat_{ij}u_{B}(a_{f(i)}, b_j).$$
\end{itemize}

The value $\Val_{A}(\Fsw_A, \Fsw_B)$ corresponds to the element $\strat$ of this polytope that maximizes $\sum_{i \in [M]}\sum_{j \in [N]} \strat_{ij}u_{A}(a_i, b_j)$. Optimizing this linear function over the above polytope can be done in time $\poly(M, N, |\Fsw_A|, |\Fsw_B|)$ via any linear program solver. Computing $\Val_{B}(\Fsw_A, \Fsw_B)$ can be likewise done efficiently.

For player $A$, the regret comparator function sets $\emptyset$, $\mathcal{E}$, and $\mathcal{I}$ contain $0$, $M$, and $M^2$ elements respectively. In all three of these cases $|\Fsw_A| = \poly(M)$; likewise, in all three of these cases $|\Fsw_B| = \poly(N)$ (and thus we can efficiently compute these values when $\Fsw_A, \Fsw_B \in \{\emptyset, \mathcal{E}, \mathcal{I}\}$).
\end{proof}

\subsection{Reward separations}

We show that with respect to optimal values, these equilibrium classes are often distinct, and there exist games where values do not collapse. 
The separations we show here consider the equilibrium cases either where both players have identical regret constraints, or where player $A$ is unconstrained. We note that while inspecting other cases, we identified similar examples for several other generalized equilibrium pairs, and we expect that strict separations exist between any distinct pair of generalized equilibria for the three regret notions we consider, in any direction not immediately precluded by the regret constraints.
We are mostly interested in cases where $B$ is constrained, and $A$ may be constrained or unconstrained.
\begin{theorem}\label{thm:GE-seps}
For each of the following, there exists a $4 \times 4$ game $G$ with rewards in $\{0,1,2\}$ where:
\begin{enumerate}
    \item $\Val_A{(\emptyset, \mathcal{E})} > \Val_A{(\emptyset, \mathcal{I})} > \Val_A{(\mathcal{E}, \mathcal{E})} > \Val_A{(\mathcal{I}, \mathcal{I})}$
    \item $\Val_A{(\emptyset, \mathcal{E})} > \Val_A{(\mathcal{E}, \mathcal{E})} > \Val_A{(\emptyset, \mathcal{I})} >  \Val_A{(\mathcal{I}, \mathcal{I})}$
\end{enumerate}
\end{theorem}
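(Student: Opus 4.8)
The plan is to prove both items by explicit construction: for each of the two prescribed total orders I would exhibit a single $4\times 4$ game with entries in $\{0,1,2\}$ and then read off the four quantities $\Val_A{(\emptyset,\mathcal{E})}$, $\Val_A{(\emptyset,\mathcal{I})}$, $\Val_A{(\mathcal{E},\mathcal{E})}$, and $\Val_A{(\mathcal{I},\mathcal{I})}$ from the linear programs of Proposition~\ref{prop:gen-eq-lp}. Before constructing anything, it helps to pin down which orderings are even achievable. Relaxing a player's deviation class removes regret constraints and hence enlarges the equilibrium polytope, which can only weakly increase $A$'s attainable value; since zero swap-regret implies zero external-regret, the $\mathcal{I}$-polytope is contained in the $\mathcal{E}$-polytope for either player. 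Consequently $\Val_A{(\emptyset,\mathcal{E})}$ is always the largest of the four and $\Val_A{(\mathcal{I},\mathcal{I})}$ the smallest, while the two middle quantities $\Val_A{(\emptyset,\mathcal{I})}$ (the Stackelberg value, by Proposition~\ref{prop:stack-0Ival}) and $\Val_A{(\mathcal{E},\mathcal{E})}$ (the best coarse correlated equilibrium for $A$) are incomparable in general. The two items are precisely the two ways of totally ordering these incomparable middle values, so the task reduces to realizing each middle ordering while keeping the top and bottom strict.

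For the separations that do not involve the middle ordering, I would rely on the standard distinction that coarse correlated equilibria may place weight on dominated actions whereas correlated/swap equilibria may not — a feature the paper already exploits for its mean-based lower bounds. Concretely, I would give player $B$ an action $b^{*}$ that is dominated but pays player $A$ the maximal reward $2$ whenever it is played. In a $(\cdot,\mathcal{E})$-equilibrium, $A$ (being unconstrained, or only externally constrained) can arrange a correlation supporting $b^{*}$ since its external-regret deficit is averaged against better $B$-actions, while any $(\cdot,\mathcal{I})$-equilibrium must drop $b^{*}$ from the support. This simultaneously produces the strict gaps $\Val_A{(\emptyset,\mathcal{E})} > \Val_A{(\emptyset,\mathcal{I})}$ and $\Val_A{(\mathcal{E},\mathcal{E})} > \Val_A{(\mathcal{I},\mathcal{I})}$, and gives me an independent lever on each value.

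The two middle orderings require opposite design choices and are the crux. For item~1 I need $\Val_A{(\emptyset,\mathcal{I})} > \Val_A{(\mathcal{E},\mathcal{E})}$, i.e.\ $A$'s commitment power as a Stackelberg leader should outstrip the best coarse correlated equilibrium; this is the intuitive direction, and I would engineer a subgame in which a single committed mixed strategy of $A$ forces a high-paying best response from $B$ that no external-regret constraint on $A$ can reproduce. The \emph{main obstacle} is item~2, where I need $\Val_A{(\mathcal{E},\mathcal{E})} > \Val_A{(\emptyset,\mathcal{I})}$: a coarse correlated equilibrium between two merely externally-constrained players must hand $A$ strictly more than $A$ could ever obtain by committing against a best-responding opponent. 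Finding such a game is delicate, since Stackelberg commitment is usually the more powerful primitive; the idea is to use correlation to sustain a profile whose $A$-payoff beats that of every fixed-commitment/best-response pair, while ensuring it is exactly $B$'s weaker external-regret constraint (not the swap constraint) that permits this profile.

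Finally, for each candidate game I would certify the four values rigorously rather than by inspection: for each of the four LPs, exhibit an explicit optimal $\sigma$ (a primal feasible point of the relevant polytope attaining the claimed value) together with a dual certificate upper-bounding that value. Because the games are $4\times 4$ with three-valued entries, the LPs are small and the certificates are short, so once the item~2 game is found all four inequalities reduce to checking a handful of linear constraints. I would additionally confirm that each inequality is \emph{strict}, ruling out the degenerate coincidences that the monotone partial order permits but the theorem forbids.
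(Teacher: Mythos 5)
Your high-level analysis is sound and in fact sharper than what the paper writes down: the observation that enlarging a deviation class shrinks the equilibrium polytope (and that the correlated-equilibrium polytope sits inside the coarse-correlated one, since summing the single-swap constraints $f_{ik}$ over $i$ recovers each external constraint) correctly pins $\Val_A(\emptyset,\mathcal{E})$ at the top and $\Val_A(\mathcal{I},\mathcal{I})$ at the bottom, reducing the theorem to realizing the two possible strict orderings of the incomparable middle pair $\Val_A(\emptyset,\mathcal{I})$ versus $\Val_A(\mathcal{E},\mathcal{E})$. Your proposed verification method (primal optimizer plus dual certificate for each small LP) is also a perfectly adequate, indeed more rigorous, substitute for the paper's ``check it with an LP solver.''

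The gap is that you never produce the games, and for a theorem whose entire content is the existence of two explicit witnesses, the witness \emph{is} the proof. The paper's proof consists of two concrete $4\times 4$ payoff-matrix pairs (found by random search over $\{0,1,2\}$-valued games) together with the four computed values for each; everything else is bookkeeping. Your dominated-action lever plausibly yields the gaps $\Val_A(\emptyset,\mathcal{E})>\Val_A(\emptyset,\mathcal{I})$ and $\Val_A(\mathcal{E},\mathcal{E})>\Val_A(\mathcal{I},\mathcal{I})$, but you still must exhibit a game where all four strict inequalities hold simultaneously in the prescribed order, and you yourself flag the item~2 ordering $\Val_A(\mathcal{E},\mathcal{E})>\Val_A(\emptyset,\mathcal{I})$ --- correlation between two externally-constrained players beating Stackelberg commitment --- as the main obstacle without resolving it. There is no structural argument in your write-up that guarantees such a game exists, let alone one confined to $4\times 4$ with rewards in $\{0,1,2\}$; absent the explicit construction (or a nonconstructive existence argument, which you do not attempt), the proof is incomplete. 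To close it you would need to either carry out the search the paper did or hand-build the two games and supply the promised LP certificates.
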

\begin{proof}
    We prove both results by exhibiting a game with the desired chain of inequalities, which we found by searching random examples of $4\times4$ games with values constrained in $\{0, 1, 2\}$ and computing the various values of the games with a linear programming library. The numerical values are easy to check with computation. The game $G_1 := (M_{A_1}, M_{B_1})$ satisfies the conditions for the first chain of inequalities, and the game $G_2:= (M_{A_2}, M_{B_2})$ satisfies the conditions for the second chain of inequalities.
    First we instantiate the game $G_1$:
    \begin{align*}
    \begin{split}
        M_{A_1} &:= \begin{bmatrix} 1 & 0 & 0 & 0 \\ 1 & 0 & 0 & 1 \\ 2 & 2 & 0 & 2 \\ 0 & 2 & 0 & 0 \end{bmatrix} \quad\quad
        M_{B_1} := \begin{bmatrix} 0 & 2 & 0 & 0 \\ 1 & 1 & 1 & 0 \\ 1 & 0 & 2 & 0 \\ 1 & 0 & 0 & 1 \end{bmatrix}
    \end{split}
    \end{align*}
    The corresponding values for game $G_1$ are simple to check:
    \begin{enumerate}
        \item $\Val_A{(\emptyset, \mathcal{E})} = 8/5$.
        \item $\Val_A{(\emptyset, \mathcal{I})} = 4/3$.
        \item $\Val_A{(\mathcal{E}, \mathcal{E})} = 1$.
        \item $\Val_A{(\mathcal{I}, \mathcal{I})} = 0$.
    \end{enumerate}
    Then we instantiate the game $G_2$:
    \begin{align*}
    \begin{split}
        M_{A_2} &:= \begin{bmatrix} 2 & 0 & 1 & 0 \\ 2 & 1 & 1 & 0 \\ 0 & 2 & 1 & 2 \\ 2 & 0 & 2 & 1 \end{bmatrix} \quad\quad
        M_{B_2} := \begin{bmatrix} 1 & 0 & 1 & 2 \\ 0 & 1 & 2 & 0 \\ 1 & 0 & 2 & 0 \\ 0 & 2 & 1 & 1 \end{bmatrix}
    \end{split}
    \end{align*}
    The corresponding values for game $G_2$ are simple to check:
    \begin{enumerate}
        \item $\Val_A{(\emptyset, \mathcal{E})} = 13/7$.
        \item $\Val_A{(\mathcal{E}, \mathcal{E})} = 12/7$.
        \item $\Val_A{(\emptyset, \mathcal{I})} = 5/3$.
        \item $\Val_A{(\mathcal{I}, \mathcal{I})} = 4/3$.
    \end{enumerate}

\end{proof}

\section{Deviation to weaker regret classes}\label{app:weaker_regret}

In Section \ref{sec:no-swap-eq}, we show that if two players are playing no-swap-regret strategies against one another, it is often in the interest of each player to switch to playing their Stackelberg strategy (in particular, this is true whenever the game does not have a pure Nash equilibrium). However, as we later argue, learning ones Stackelberg strategy in such a game can be difficult. It is therefore natural to ask whether there are beneficial deviations to computationally efficient strategies. In particular, is it ever in a player's interest to weaken their regret benchmark, and e.g. switch from playing a no-swap-regret strategy to a no-external-regret strategy?

We give an example showing this is true in a fairly strong sense: we exhibit a game $G$ where if player $A$ switches from playing a no-swap-regret algorithm to \textit{any} no-external-regret algorithm, their asymptotic utility never decreases and sometimes strictly increases -- i.e., there is no downside to switching to an external regret algorithm (and potentially a high upside). We have the following theorem.

\begin{theorem}
There exists a game $G$ where $\MinVal_{A}(\Ee, \I) \geq \Val_{A}(\I, \I)$ and $\Val_{A}(\Ee, \I) \geq \Val_{A}(\I, \I)$. 
\end{theorem}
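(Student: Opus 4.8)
The plan is to prove both inequalities by exhibiting an explicit small game and certifying the relevant quantities as optima of linear programs, exactly in the style of the proof of Theorem \ref{thm:GE-seps}. By Proposition \ref{prop:gen-eq-lp}, each of $\Val_A(\I,\I)$, $\MinVal_A(\Ee,\I)$, and $\Val_A(\Ee,\I)$ is the optimum of a polynomial-size LP over an explicit polytope: the $(\I,\I)$-equilibria are exactly the correlated equilibria of $G$, while the $(\Ee,\I)$-equilibria are the distributions $\sigma$ under which player $A$ has no external regret and player $B$ has no swap regret. By Proposition \ref{prop:gen-val-upper-bound} and Theorem \ref{thm:targeted_equilibrium}, these LP optima are precisely the best- and worst-case average rewards attainable under the corresponding regret dynamics, which is what gives the two inequalities their operational ``never decreases / sometimes increases'' meaning.

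First I would dispatch the second inequality, $\Val_A(\Ee,\I)\ge\Val_A(\I,\I)$, which in fact holds in \emph{every} game. Each single-swap constraint on $A$ dominates the constant-map constraint in the following sense: summing the internal constraints $\sum_j \sigma_{ij}\bigl[u_A(a_i,b_j)-u_A(a_c,b_j)\bigr]\ge 0$ over all source actions $i$ for a fixed target $c$ yields exactly the coarse (external) constraint $\E_\sigma[u_A(a,b)]\ge\E_\sigma[u_A(a_c,b)]$ as a sum of nonnegative terms. Hence every correlated equilibrium is an $(\Ee,\I)$-equilibrium, the $(\I,\I)$-polytope is contained in the $(\Ee,\I)$-polytope, and maximizing $u_A$ over the larger set gives the inequality for free.

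The substance is the first inequality. The same containment gives $\MinVal_A(\Ee,\I)\le\MinVal_A(\I,\I)\le\Val_A(\I,\I)$, so $\MinVal_A(\Ee,\I)\ge\Val_A(\I,\I)$ can hold only when all three coincide; equivalently, I must produce a game in which (a) every correlated equilibrium yields player $A$ the same value $v$ (the correlated-equilibrium polytope is flat in $A$'s value), and (b) no $(\Ee,\I)$-equilibrium drops $A$ below $v$. To make the example meaningful — matching the surrounding claim that the switch sometimes strictly helps — I would additionally arrange $\Val_A(\Ee,\I)>v$, so the larger polytope contains a distribution that is internally regretful for $A$ yet externally rational, with $B$ still at a correlated equilibrium, and that strictly improves $A$'s reward. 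Concretely I would search over small games with integer payoffs (as in Theorem \ref{thm:GE-seps}), solve the three LPs for each candidate, and report explicit matrices $M_A,M_B$ together with the computed values $\MinVal_A(\Ee,\I)=\Val_A(\I,\I)=v$ and $\Val_A(\Ee,\I)>v$; the verification is then a routine, independently checkable LP computation.

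The main obstacle is constructing the game so that condition (b) holds while still permitting a strict upside. Relaxing $A$'s allowed deviations from swaps to constant maps strictly enlarges the equilibrium polytope, and one must certify that the \emph{minimum} of $u_A$ over this strictly larger region is not pulled below $v$ — that is, every externally-rational distribution with $B$ at a correlated equilibrium keeps $A$ at least at $v$ — even though some such distribution must exceed $v$. Shaping $B$'s payoffs so that the no-swap-regret constraints pin $A$'s correlated-equilibrium value to a constant, while shaping $A$'s payoffs so that the only new distributions opened up by the relaxation to external regret are weakly favorable to $A$, is the crux; once a suitable candidate is found, closing the argument reduces to verifying the three LP optima.
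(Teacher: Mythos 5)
Your structural analysis is sound and in places sharper than the paper's own write-up. The observation that summing player $A$'s single-swap constraints over all source actions (for a fixed target) yields exactly the external-regret constraint is correct, so the $(\I,\I)$-polytope is contained in the $(\Ee,\I)$-polytope and $\Val_A(\Ee,\I)\ge\Val_A(\I,\I)$ holds in \emph{every} game --- a general fact the paper does not bother to isolate, since it just reads the inequality off the computed numbers. Likewise, your deduction that the containment forces $\MinVal_A(\Ee,\I)\le\MinVal_A(\I,\I)\le\Val_A(\I,\I)$, so the first inequality can only hold when player $A$'s value is constant over the entire correlated-equilibrium polytope, is a correct and useful reduction that the paper leaves implicit.

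The problem is that the theorem is an existence claim and your proposal never produces the witness. You reduce the task to finding a game in which (a) every correlated equilibrium gives $A$ the same value $v$, (b) no $(\Ee,\I)$-equilibrium drops $A$ below $v$, and (c) some $(\Ee,\I)$-equilibrium strictly exceeds $v$; you then flag (b) as ``the crux'' and defer it to an unspecified computer search. Nothing in the proposal certifies that these requirements are simultaneously satisfiable --- a priori, enlarging $A$'s equilibrium polytope could always pull the minimum down whenever it pushes the maximum up, and ruling this out is the entire content of the result. The paper's proof is exactly the exhibition of the missing object: an explicit $3\times 3$ game with entries in $\{0,1,2\}$ (rows $(0,0,2)$, $(0,0,1)$, $(0,1,1)$ for $A$ and $(2,1,1)$, $(0,2,1)$, $(0,0,2)$ for $B$) for which one verifies $\Val_A(\I,\I)=\MinVal_A(\I,\I)=\MinVal_A(\Ee,\I)=0$ and $\Val_A(\Ee,\I)=1$. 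Until you supply a concrete game and check the three LP optima, the proof is incomplete; the surrounding scaffolding, while correct, does not substitute for the example.
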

\begin{proof}
Consider the game $G$ specified by the two payoff matrices

\begin{align*}
\begin{split}
        M_{A} &:= \begin{bmatrix} 0 & 0 & 2  \\ 0 & 0 & 1  \\ 0 & 1 & 1  \end{bmatrix} \quad\quad
        M_{B} := \begin{bmatrix} 2 & 1 & 1  \\ 0 & 2 & 1 \\ 0 & 0 & 2  \end{bmatrix}.
\end{split}
\end{align*}

The corresponding values for this game are simple to compute:

\begin{enumerate}
    \item $\Val_{A}(\I, \I) = \MinVal_{A}(\I, \I) = 0$.
    \item $\MinVal_{A}(\Ee, \I) = 0$.
    \item $\Val_{A}(\Ee, \I) = 1$.
\end{enumerate}
\end{proof}

\section{Proof of Theorem \ref{thm:targeted_equilibrium}} 
\begin{proof}
Let $\strat$ be the joint distribution over action pairs corresponding to $\Psi$. Let $T$ denote the total number of steps we run the algorithm for; we will use $t \leq T$ as a changing step size. Suppose both player $A$ and player $B$ know $\strat$\footnote{$\strat$ can be communicated from Player $A$ to Player $B$ during a burn-in phase of length $> M \cdot N$, the dimension of the discrete joint distribution over pure player strategy pairs.}. We will define $\LL^*_A(\Psi)$ and $\LL^*_B(\Psi)$ in two phases: in the first phase, $A$ and $B$ trust their opponent and play according to deterministic sequences corresponding to approximations of $\strat$. If either player violates the other's trust $o(T)$ times, then the player defects to playing $\LL_A$ or $\LL_B$ respectively forever after.

First we elaborate upon the trusting phase. Both players consider windows of length $\Length(t)$ which is monotonically increasing in $t$ and also which grows sub-linearly in $t$. For concreteness, we pick a sub-linear monotonic increasing growth rate of $\mathcal{O}(\sqrt{t})$ and describe how to implement the schedule of window lengths. We can keep track of a real-valued variable $Z_t$ with $Z_1 = M\cdot N$, and after each window completes, update it by $Z_{t_{\text{next}}} = Z_t + \frac{1}{2\sqrt{t}}$ where $t$ is the step at the end of the window.
To get an integral window length, we define $\Length(t) := \lfloor Z_t \rfloor$. Thus in this case, the $\Length(t)$ grows as $O(\sqrt{t})$, satisfying both conditions.
Both players then compute a weighting instantiated with pairs of pure strategies by assigning $c_i := \lfloor \Length(t) \cdot \strat_i\rfloor$ example pairs (each of weight $1/\Length(t)$) to pure strategy pair $i \in [M\cdot N]$. This weighted distribution approximates $\strat$ given $\Length(t)$ samples. Note that the rounding approximation is feasible given only $\Length(t)$ samples since $\sum_{i = 1}^{M\cdot N} c_i \leq \Length(t)$. These pure strategy pair samples are then lexicographically ordered. Then, both players act according to the pure strategies in order, thereby (over the window) achieving an $(M\cdot N)/\Length(t)$ $\ell_1$ approximation to $\strat$:
\begin{align*}
\begin{split}
    \sum_{i = 1}^{M\cdot N} \left|\strat_i - \frac{c_i}{\Length(t)}\right| &= \sum_{i = 1}^{M\cdot N} \left|\strat_i - \frac{\lfloor \Length(t) \cdot \strat_i\rfloor}{\Length(t)}\right| \leq \frac{M\cdot N}{\Length(t)}.
\end{split}
\end{align*}
This process repeats for every window.

The distrustful phase occurs if one of the players does not follow the agreed-upon instructions $T_{\text{distrust}}$ times, where $T_{\text{distrust}}$ is taken to be $o(T)$. After this many violations, Player $A$ defaults to playing $L_A$ and likewise Player $B$ defaults to playing $L_B$ ever after.

We now show that this algorithm satisfies both conditions in the theorem statement. First, if both players use $\LL^*_A(\Psi)$ and $\LL^*_B(\Psi)$, the play converges to $\strat$, the joint distribution of play corresponding to $\Psi$. This point is immediate to observe since $(M\cdot N)/\Length(t) \to 0$ as $t \to \infty$ as $\Length(t)$ is monotone increasing in $t$.

Now we prove that both players are no-$\Fsw$-regret with respect to any adversary. First we show no-$\Fsw$-regret for both players in the case where Player $A$ plays $\LL^*_A(\Psi)$ and Player $B$ plays $\LL^*_B(\Psi)$. Let $\hat{\strat}_t$ be the approximation to $\strat$ implemented over the window corresponding to final step $t$, and suppose that $\left\|\strat - \hat{\strat}_t\right\|_1 < \varepsilon_t$. Recalling the proof of Theorem~\ref{thm:targeted_equilibrium}, for Player $A$ (and analogously for Player $B$) we can bound \begin{align*}
\begin{split}
    \left|\E_{(a, b) \sim \strat}\left[u_A\left(a, b\right)\right] - \E_{(a, b) \sim \hat{\strat}_t}\left[u_A\left(a, b\right)\right]\right| &= \left|\left(\strat - \hat{\strat}_t\right)^{\top}u_A\right|
    \\
    &\leq \|\strat - \hat{\strat}_t\|_1\cdot \|u_A\|_2 \leq \varepsilon_t \cdot C \cdot \sqrt{M\cdot N},
\end{split}
\end{align*}
where here we interpret $\strat, \hat{\strat}_t, u_A, u_B \in \R^{M\times N}$ as vectors over the space of all action pairs. Thus for this particular window, the overall gap from the expected reward for $\strat$ is $\varepsilon_t\cdot C\cdot \sqrt{M\cdot N}$. 

Then we can similarly upper bound $\E_{(a, b) \sim \hat{\strat}_t}\left[u_A\left(f_A(a), b\right)\right] \leq \E_{(a, b) \sim \strat}\left[u_A\left(f_A(a), b\right)\right] + \varepsilon_t \cdot C \cdot M \sqrt{N}$ for any choice of $f_A \in \Fsw_A$:
\begin{align*}
\begin{split}
   \mathbf{(*)} &= \left|\E_{(a, b) \sim \strat}\left[u_A\left(f_A(a), b\right)\right] - \E_{(a, b) \sim \hat{\strat}_t}\left[u_A\left(f_A(a), b\right)\right]\right| \\
    &= \left|\sum_{k = 1}^M\sum_{j = 1}^N \left(\hat{\strat}_t(k, j) - \strat(k, j)\right)\cdot\sum_{i = 1}^M f_A(a_k)_i\cdot u(\cdot, b_j)\right|
    \\
    &\leq \|\strat - \hat{\strat}_t\|_1\cdot \|\left[ f_A(a_1)^{\top}u_A(\cdot, b_1), \cdots, f_A(a_M)^{\top}u_A(\cdot, b_N)\right]\|_2 
    \\
    &\leq \varepsilon_t \cdot \sqrt{M\cdot N}\cdot \max_{k, j}\|f_A(a_k)\|_2 \cdot \|u_A(\cdot, b_j)\|_2
    \\
    &\leq \varepsilon_t \cdot \sqrt{M\cdot N} \cdot 1 \cdot \sqrt{M\cdot C^2}
    \\
    &= \varepsilon_t \cdot M \cdot \sqrt{N}\cdot C.
\end{split}
\end{align*}

Then recall that $\varepsilon_t \leq \frac{M\cdot N}{\Length(t)}$.
Thus, overall, the average regret using due to the window is bounded by
\begin{align*}
    \frac{1}{\Length(t)}\Reg_{\Fsw}(\hat{\strat}_t, t) \leq \frac{1}{\Length(t)}\Reg_{\Fsw}(\strat, t) + C_2\cdot \frac{1}{\Length(t)},
\end{align*}
where $C_2$ is another constant depending on $C, M, N$ and where we use the shorthand $\Reg_{\Fsw}(\cdot, t)$ to denote the $\Fsw$-regret over the window ending in step $t$. Now call $\hat{\strat}$ the strategy where the joint distribution $\hat{\strat}_t$ as previously defined gets played in each window $t$. Now we can bound the total $\Fsw$-regret for $\hat{\strat}$ by the sum of the $\Fsw$-regrets for each window (maximizing $f_A \in \Fsw_A$ over the steps in each window makes it more competitive than optimizing only one $f_A$ over the whole length $T$ sequence). Thus for total $\Fsw$-regret, we have:
\begin{align*}
    \Reg_{\Fsw}(\hat{\strat}, T) \leq\quad\Reg_{\Fsw}(\strat, T) + \NumWindows(T)\cdot C_2 \quad\leq
    \quad\Reg_{\Fsw}(\strat, T) + o\left(T\right),
\end{align*}
 where
\begin{align*}
    \NumWindows(T) &:= \min_{\sum_{t = 1}^k \Length(t) \geq T} k.
\end{align*}
The last step follows since $\NumWindows(T) \leq o(T)$, because $\Length(T) \leq o(T)$.

Since we already know that the strategy $\strat$ is no-$\Fsw$-regret and $\Length(T)$ is $o(T)$, we have proven that playing $\hat{\strat}$ is no-$\Fsw$-regret in the case where Player $A$ plays $\LL^*_A(\Psi)$ and Player $B$ plays $\LL^*_B(\Psi)$.

The second case where the opposing player does not cooperate is easier: after at most $o(T)$ steps, the player switches to an algorithm $\LL_A$ or $\LL_B$ respectively which is no-$\Fsw$-regret and incurrs only $o(T)$ additional regret. Thus the theorem statement holds.

\end{proof}

\section{Proof of Theorem~\ref{thm:swap_nash}}

\begin{proof}
We begin with the first claim. To prove the forward direction, if there exists such a $\strat$, then choose a pair of low-swap-regret algorithms $(\Alg_A, \Alg_B)$ such that the time-averaged trajectory over $T$ rounds is guaranteed to asymptotically converge to $\strat$ (this is possible by either the results of \cite{foster1998asymptotic}, or our Theorem \ref{thm:targeted_equilibrium}). That is, if the two players play strategy $\strat_t$ at round $t \in [T]$, then $\hat{\strat} = \frac{1}{T} \sum_{t} \strat_t$ satisfies $||\hat{\strat} - \strat||_{\infty} = o(1)$. It follows that $\sum_{t} u_{A}(\strat_t) \geq T \cdot u_{A}(\strat) - o(T) = T \cdot \Stack_{A} - o(T)$ and therefore player $A$ has at most an $o(T)$ incentive to deviate (by \cite{DSSstrat}, they can obtain at most $\Stack_{A}T + o(T)$ against $\Alg_B$). Symmetric logic holds for player $B$.

To prove the reverse direction, assume $\Alg_{A}$ and $\Alg_{B}$ are no-swap-regret algorithms such that $(\Alg_{A}, \Alg_{B})$ is an $o(T)$-approximate Nash equilibrium in the metagame. Since they are no-swap-regret, the time-averaged play of these two algorithms  for $T$ rounds must converge to an $o(1)$-approximate correlated equilibrium $\hat{\strat}_T$; moreover, since $(\Alg_{A}, \Alg_{B})$ is an $o(T)$-approximate Nash equilibrium, $\hat{\strat}_T$ must have the property that $u_{A}(\hat{\strat}_T) \geq \Stack_{A} - o(1)$ and $u_{B}(\hat{\strat}_T) \geq \Stack_{B} - o(1)$. Taking the limit as $T \rightarrow \infty$ and selecting a convergent subsequence of the $\hat{\strat}_T$, this shows there must exist a correlated equilibrium $\strat$ with the desired properties.

Likewise, similar logic proves the second claim with the following modifications. In the forward direction, we can now choose any pair of low-swap-regret algorithms $(\Alg_{A}, \Alg_{B})$, and any correlated equilibrium $\strat$ they asymptotically converge to is guaranteed to have the property that $u_{A}(\strat) = \Stack_{A}$ and $u_{B}(\strat) = \Stack_{B}$. In the reverse direction, since any correlated equilibrium is implementable by some pair of low-regret algorithms (again, by Theorem \ref{thm:targeted_equilibrium}), the same logic shows that all correlated equilibria $\strat$ must satisfy $u_{A}(\strat) = \Stack_{A}$ and $u_{B}(\strat) = \Stack_{B}$.

Finally, to see that these two conditions are efficiently checkable, note that: i. the two values $\Stack_{A}$ and $\Stack_{B}$ are efficiently computable given the game $G$, and ii. the set of correlated equilibria $\strat$ form a convex polytope defined by a small ($\poly(N, M)$) number of linear constraints (see Proposition \ref{prop:gen-eq-lp}). In particular, since $u_A(\strat)$ and $u_B(\strat)$ are simply linear functions of $\strat$ for a given game $G$, we can efficiently check whether there exists any point in this polytope where $u_{A}(\strat) = \Stack_A$ and $u_{B}(\strat) = \Stack_B$. 
\end{proof}

\section{Proof of Theorem~\ref{thm:pure_nash}}

\begin{proof}
We will show that (for almost all games $G$) if there is a correlated equilibrium $\strat$ such that $u_{A}(\strat) = \Stack_{A}$ and $u_{B}(\strat) = \Stack_{B}$, then there exists a simultaneous unique Stackelberg equilibrium for both players in $G$, which must be a pure Nash equilibrium. Combined with Theorem \ref{thm:swap_nash}, this implies the theorem statement.

We will rely on the following fact: in almost all games $G$, both players have a unique Stackelberg strategy. To see this, consider the following method for computing $A$'s Stackelberg strategy. For each pure strategy $b_j$ for player $B$, consider the convex set $A_j \subseteq \Delta(\mathcal{A})$ containing the mixed strategies for player $A$ which induce $b_j$ as a best response (i.e., $A_j = \{\alpha \in \Delta(\mathcal{A}) \mid b_j \in \BR(\alpha) \}$). Then, for each $j \in [N]$, compute the strategy $\alpha_j \in A_j$ which maximizes $u_{A}(\alpha_j, b_j)$. The Stackelberg value $\Stack_{A}$ is then given by $\max_{j} u_{A}(\alpha_j, b_j)$. In order for this to stem from a unique Stackelberg equilibrium, it is enough that: 1. the maximum utility is not attained by more than one $j$, and 2. for each $j$, the optimizer $\alpha_j \in A_j$ is unique. 

These two properties are guaranteed to hold in almost all games. To see this, first note that the convex sets $A_j$ are determined entirely by the utilities $u_B$, so we will treat these as fixed. Now, given any convex set $A_j$, the extremal point in a randomly perturbed direction will be unique with probability $1$ -- but since $\alpha_j$ is simply the extremal point of $A_j$ in the direction specified by $u_{A}(\cdot, b_j)$ (which is a randomly perturbed direction), so $\alpha_j$ is unique in almost all games. Finally, if we perturb the magnitude of each of the utilities $u_{A}(\cdot, b_j)$ (keeping the direction the same), the maximizer $\max_{j} u_{A}(\alpha_j, b_j)$ will also be unique almost surely. 

Let $(\alpha_{A}, b_{A})$ be the Stackelberg equilibrium for player $A$ and let $(a_{B}, \beta_{B})$ be the Stackelberg equilibrium for player $B$. Now, consider the aforementioned correlated equilibrium $\strat \in \Delta(A \times B)$. We will begin by decomposing it into its marginals based on its first coordinate; that is, we will write $\strat = \sum_{i=1}^{M} \lambda_i (a_i, \beta_i)$ for some mixed strategies $\beta_i \in \Delta(\mathcal{B})$ and weights $\lambda_i$ (with $\sum_{i} \lambda_i = 1$). By the definition of correlated equilibria, note that each $a_i$ belongs to $\BR(\beta_i)$. But this means that $u_{B}(a_i, \beta_i) \leq \Stack_{B}$, with equality holding iff $(a_i, \beta_i) = (a_{B}, \beta_{B})$ (due to uniqueness of Stackelberg). Therefore, in order for $u_{B}(\strat) = \Stack_{B}$, we must have that $\strat = (a_{B}, \beta_{B})$. By symmetry, we must also have that $\strat = (\alpha_{A}, b_{A})$. If both these are true, then $\strat$ is a pure strategy correlated equilibrium of the game, and is hence a pure strategy Nash equilibrium (and moreover, is also the Stackelberg equilibrium for both $A$ and $B$). 
\end{proof}

\section{Proof of Theorem \ref{cor:exploitingnoregret}}
\begin{proof}
By Theorem \ref{thm:targeted_equilibrium}, there is a pair of $\emptyset$-regret and $\mathcal{E}$-regret algorithms $\LL_A^*$ and $\LL_B^*$ which converge to a $(\emptyset, \mathcal{E})$-equilibrium for which player $A$ obtains $\Val_A(\emptyset, \mathcal{E})$. By Proposition \ref{prop:gen-val-upper-bound}, this is optimal over all no-external-regret algorithms, as any adaptive strategy constitutes a no-$\emptyset$-regret algorithm. By Proposition \ref{prop:gen-eq-lp} we can identify the optimal $(\emptyset, \mathcal{E})$-equilibrium in $\poly(M,N)$ time, which is sufficient to implement the algorithms $\LL_A^*$ and $\LL_B^*$ efficiently for any desired $T$.
\end{proof}

\section{Dominated-swapping external regret bounds for mean-based algorithms}
For the following proof (of Theorem \ref{thm:mb-dominated-eq}), we introduce the following notion of \textit{dominated-swapping external regret}, a tighter upper bound on the behavior of mean-based algorithms than the standard no-external-regret guarantee.

\begin{definition}[Dominated-swapping external regret]
For a game $G$, let $D(G)$ be the set of dominated strategies for player $B$, i.e. $b_i \in D(G)$ if $b_i \notin \BR(\alpha)$ for all $\alpha \in \Delta(M)$. For $j,k \in [N]$ define $g_{jk}(b_i)$ as:
\begin{align*}
    g_{jk}(b_i) =&\; \begin{cases}
    b_j & b_i \notin D(G) \\
    b_k & b_i \in D(G)
    \end{cases}
\end{align*}
i.e.\ $g_{jk}(b_i)$ swaps $b_i$ to $b_k$ if $b_i$ is dominated and plays $b_j$ otherwise.
Let $\mathcal{E}_{D(G)} = \{g_{jk} : j, k \in [N]\}$ be the set of \textup{dominated-swapping external regret} comparators.
\end{definition}

This definition leads to the following tighter upper bound on what is achievable against a mean-based no-regret algorithm.

\begin{theorem}\label{thm:mb-dominated-eq}
    For any game $G$ and any mean-based no-regret algorithm used by player $B$, there is no strategy for which the average reward of player $A$ converges to $\Val_A {(\emptyset, \mathcal{E}_{D(G)})} + \varepsilon$, for any $\varepsilon > 0$.
\end{theorem}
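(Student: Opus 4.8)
The plan is to show that the reward of player $A$ against any mean-based learner is asymptotically upper bounded by $\Val_A(\emptyset, \mathcal{E}_{D(G)})$, by arguing that the empirical play must converge (up to $o(1)$) to a $(\emptyset, \mathcal{E}_{D(G)})$-equilibrium. The key structural fact, already noted in the excerpt, is that a mean-based algorithm plays any dominated action $b_i \in D(G)$ with probability only $o(1)$: since $b_i$ is never a best response, there is some fixed mixture (or action) whose cumulative reward must dominate $b_i$'s by $\Omega(T)$ whenever $b_i$ has been played a constant fraction of the time, forcing the mean-based condition to suppress it. First I would make this precise, showing that over $T$ rounds the total (expected) number of rounds in which player $B$ plays a dominated action is $o(T)$, regardless of player $A$'s (adaptive) strategy.

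Next I would translate this into the $\mathcal{E}_{D(G)}$-regret language. Recall that each comparator $g_{jk}$ leaves non-dominated actions mapped to a fixed $b_j$ and sends dominated actions to a fixed $b_k$. The standard no-external-regret guarantee of the mean-based algorithm already controls the ``swap everything to $b_j$'' part; the only additional power that $\mathcal{E}_{D(G)}$ grants the comparator over plain external regret $\mathcal{E}$ is the freedom to relabel the dominated actions to some possibly-better $b_k$. But since dominated actions are played only $o(T)$ times, the extra gain from relabeling them is at most $o(T)$ in aggregate. Combining these, I would show that player $B$'s realized $\mathcal{E}_{D(G)}$-regret is $o(T)$, i.e.\ player $B$ is (asymptotically) no-$\mathcal{E}_{D(G)}$-regret against any player-$A$ strategy.

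Given that player $B$ is effectively a no-$\mathcal{E}_{D(G)}$-regret learner and player $A$'s strategy is trivially no-$\emptyset$-regret, Proposition~\ref{prop:F_no_regret_to_F_eq} implies that the time-averaged joint play $\sigma$ converges to an (approximate) $(\emptyset, \mathcal{E}_{D(G)})$-equilibrium. Then Proposition~\ref{prop:gen-val-upper-bound} (or directly the definition of $\Val_A$) caps player $A$'s average reward at $\Val_A(\emptyset, \mathcal{E}_{D(G)}) + o(1)$, which rules out convergence to $\Val_A(\emptyset, \mathcal{E}_{D(G)}) + \varepsilon$ for any fixed $\varepsilon > 0$. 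The care here is that the mean-based guarantee controls the \emph{marginal} play of player $B$ round-by-round, whereas the equilibrium statement is about the joint empirical distribution; I would handle this by working with the expected per-round marginals $p_B^t$ and noting that the dominated-action mass being $o(1)$ on average suffices to bound the additional comparator gain in expectation, with concentration (e.g.\ Azuma--Hoeffding over the bounded-utility martingale) used to pass from expectations to the realized sequence.

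The main obstacle I anticipate is the first step: rigorously bounding the number of rounds a dominated action is played by a mean-based learner when player $A$ is adaptive and adversarial. A single action $b_i$ being dominated means no pure $b_i$ is a best response to any $\alpha$, but the dominating object may be a \emph{mixture} rather than a single action, whereas the mean-based condition compares cumulative rewards of individual actions. I would resolve this by using a compactness/separation argument: since $D(G)$ is the set of actions excluded from $\BR(\alpha)$ for \emph{all} $\alpha$, there is a uniform gap $\delta > 0$ such that for every $\alpha$ some action beats $b_i$ by at least $\delta$, and then charge the cumulative-reward deficit of $b_i$ against the empirically realized best action to invoke the mean-based suppression, taking $\gamma = o(1)$ small enough relative to $\delta$.
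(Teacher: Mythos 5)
Your proposal follows essentially the same route as the paper's proof: first establish a uniform gap $\delta > 0$ so that any dominated action's cumulative reward falls $\Omega(t)$ behind some pure best response to the empirical distribution $\alpha_t$, forcing the mean-based condition to suppress it after $o(T)$ rounds; then observe that the comparators $g_{jk}$ with $j \neq k$ can gain only $o(T)$ over the plain external-regret comparators because dominated actions are played only $o(T)$ times, so the learner is no-$\mathcal{E}_{D(G)}$-regret and the value bound of Proposition~\ref{prop:gen-val-upper-bound} applies. The additional care you propose (martingale concentration, marginal-versus-joint bookkeeping) is a refinement the paper elides but does not change the argument.
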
 
\begin{proof}
First, we observe that mean-based algorithms will never play a dominated strategy $b_i \in D(G)$ in more than $o(T)$ rounds. As $b_i$ is dominated, there is some $\delta > 0$ such that for every $\alpha \in \Delta(M)$, there is some $b_j$ where $u_B(\alpha, b_j) \geq u_B(\alpha, b_i) + \delta$. Let $\alpha_t$ denote the empirical distribution of player $A$'s actions up to time $t$. After some window of $O(\gamma T) = o(T)$ rounds we will have the cumulative rewards $\sigma_{i, t}$ and $\sigma_{j, t}$ satisfy $\sigma_{i, t} < 
 \sigma_{j, t} - \delta t <  \sigma_{j, t} - \gamma T$ under any $\alpha_t$ for some $b_j$ in each subsequent round, and so $b_i$ will never be played in more than $o(T)$ rounds.

We can also see that any such no-$\mathcal{E}$-regret algorithm is a no-$\mathcal{E}_{D(G)}$-regret algorithm. Suppose such an algorithm had $\mathcal{E}_{D(G)}$-regret $\epsilon T$, for $\epsilon > 0$; then, there is some $g_{jk}$ for which $ U_B(\alpha_T, g_{jk}(\beta_T)) \geq U_B(\alpha_T, \beta_T) + \epsilon$. By the $\mathcal{E}$-regret guarantee this cannot occur if $j = k$, as any such function $g_{jj}$ is equivalent to the fixed deviation rule for $b_j$. However, if this occurs for $j \neq k$, such an algorithm must have played dominated strategies in a total $\Omega(\epsilon T)$. This contradicts our assumption that no dominated strategy $b_i$ is played in more than $o(T)$ rounds, and so any mean-based no-$\mathcal{E}$-regret algorithm is also a no-$\mathcal{E}_{D(G)}$-regret algorithm, against which player $A$ cannot obtain average reward which converges to any amount higher than $\Val_A {(\emptyset, \mathcal{E}_{D(G)})} + o(1)$.
\end{proof}

\section{Proof of Theorem \ref{thm:eq-mb-sep}}

\begin{proof}
\begin{figure}[h]
    \centering
\begin{tabular}{ c |  c |c |c | } 

  & $b_1$ & $b_2$ &  $b_3$ \\ 
  \hline
 $a_1$ & 1, 1 & 0, 0 & 3, 0 \\ 
 \hline
 $a_2$ & 0, 0 & 1, 1 & 0, 0 \\ 
 \hline
\end{tabular}
    \caption{Game where $\Val_A{(\emptyset, \mathcal{E})} > \Val_A {(\emptyset, \mathcal{I})} = \textsf{MBRew}_A$ }
    \label{fig:my_label}
\end{figure}
Let $\textsf{MBRew}_A$ denote the maximal reward obtainable by player $A$ when player $B$ uses a mean-based algorithm.
Observe that $b_3$ is dominated for player $B$, and thus cannot be included in any $(\emptyset, \mathcal{I})$-equilibrium (by Theorem \ref{thm:mb-dominated-eq}). Further, it will never be played by a mean-based learner for more than $o(T)$ rounds, as for any distribution over $a_1$ and $a_2$ the best response is either $b_1$ or $b_2$. As such, both $\Val_A{(\emptyset, \mathcal{I})}$ and $\textsf{MBRew}_A$ are at most $1 + o(1)$; a reward of $1 - o(1)$ is obtainable by committing to either $a_1$ or $a_2$ for each round. However, we can see that the optimal $(\emptyset, \mathcal{E})$-equilibrium $p$ for player $A$ includes positive mass on $(a_1, b_3)$, and yields an average reward of $\Val_A{(\emptyset, \mathcal{E})} = 2$ for player $A$. Let $p_1$ be the probability on $(a_1, b_1)$, let $p_2$ be the probability on $(a_2, b_2)$, let $p_3$ be the probability on $(a_1, b_3)$, and let $p_0$ be the remaining probability. The reward for player $A$ is given by:
$$\textsf{Rew}_A( p) = p_1 + p_2 + 3 p_3 $$
and $p$ defines a $(\emptyset, \mathcal{E})$-equilibrium if 
$$\textsf{Rew}_B( p ) \geq \textsf{Rew}_B( p \rightarrow b_i)$$
for each $b_i$, which holds if:
\begin{align*}
    p_1 + p_2 \geq&\; p_1 + p_3; \\ 
    p_1 + p_2 \geq&\; p_2; \\
    p_1 + p_2 \geq&\; 0.
\end{align*}
Only the first constraint is non-trivial, and so the optimal $(\emptyset, \mathcal{E})$-equilibrium for player $A$ occurs by maximizing $ p_1 + p_2 + 3 p_3 $ subject to $p_2 \geq p_3$, which yields a probability of 0.5 for both $p_2$ and $p_3$ (and 0 for $p_1$ and $p_0$), as well as an average reward of 2. As such, player $A$ cannot obtain a reward approaching $\Val_A (\emptyset, \mathcal{E})$, as their per-round reward is at most $1 + o(1)$.
\end{proof}
\section{Proof of Theorem \ref{thm:query-sim}}

\begin{proof}
We recall that the $\SU$ algorithm from \cite{Letchford2009LearningAA} finds initial points $\alpha^*(b_i)$ in each best response region via random sampling, which takes takes $1/\poly(\varepsilon^{-1})$ queries in expectation. Then, upon calibrating for $O(\log(1/\varepsilon))$ bits of precision $\SU$ makes $\poly(M,N,\log(1/\varepsilon))$ queries, each of which can be taken to be a point on some grid of spacing $1/\poly(\varepsilon^{-1})$ within the simplex by the precision condition. The computed approximate Stackelberg strategy is then the optimal such point on the grid.

We first describe our strategy for simulating each query against an arbitrary anytime-no-regret learner; as $\mathcal{E} \subseteq \Phi$, we can restrict to considering only no-external-regret learners, as these regret constraints will always be satisfied. 
To implement a query $q$, greedily play the action whose historical frequency of play is the furthest below its target frequency in $q$. 
After $O(\poly(1/\varepsilon))$ rounds, 
the historical distribution will be within $1/\poly(\varepsilon^{-1})$ of $q$, and continuing the greedy selection strategy indefinitely will ensure that the history remains in a $1/\poly(\varepsilon^{-1})$-ball around $q$. 
Let $t_q$ be the time at which this occurs. After maintaining the greedy strategy for $q$ for an additional $\omega(t^c_q)$ rounds, the anytime regret bound ensures that most frequently played item must indeed be the best response response to some point in the ball around $q$, provided that this ball is contained entirely inside some best response region $R_j$. 
For the sampling step, a taking sufficiently small grid (but still $1/\poly(\varepsilon^{-1})$) ensures that random sampling still suffices to find a point a point in each best response region even if our queries may be adversarially perturbed to neighboring points on the grid, as each region is convex and has volume at least $1/\poly(\varepsilon^{-1})$.
To address the issue for the line search steps, it suffices to take an additional step along each search conducted by $\SU$ before termination, where we then take each hyperplane boundary estimate to be one step inward along the grid from where our search terminates, maintaining a buffer between each hyperplane estimate in which all our points of uncertainty must lie. This adds at most a constant factor to our query complexity, and impacts our approximation by $1/\poly(\varepsilon^{-1})$, which then yields us a runtime of $\poly(1/\varepsilon)^Q$ rounds.

For the case of a no-adaptive-regret learner, suppose such an algorithm is calibrated for $T = O(Q^{C_1} (1/\varepsilon)^{C_2})$; then, over any window of length $W$ its regret is at most $O\parens{(Q^{C_1 } (1/\varepsilon)^{C_2})^c {W}^{-1}}$.
Taking $W = \omega((Q^{C_1 } (1/\varepsilon)^{C_2})^c)$ yields a per-round regret of at most $o(1)$ over the window, and so an algorithm must play a best response in $W - o(W)$ of the rounds. For sufficiently large $C_1$ and $C_2$, each $W$ is large enough to yield the same precision we required for the anytime case, where now we greedily play the action whose frequency is furthest below its target {\it since our previous query terminated}, which allows us to again simulate the $O(Q)$ queries in $\poly(Q/\varepsilon)$ rounds (accounting for the robustness checks) while yielding $\Theta(WQ) = o(T)$.
\end{proof}

\section{Proof of Theorem \ref{thm:swap-mb-game-example}}

\begin{proof}
Our game consists of $M$ actions $\A$ for the optimizer, and $N = 2M+{M \choose 2}$ actions for the learner, which are divided into $M$ {\it primary} actions $\B$, $M$ {\it secondary} actions $\Ss$,
and ${M \choose 2}$ {\it safety} actions $\Y$.

If we restrict the learner to only playing primary actions, the game somewhat resembles a coordination game, where each pure strategy pair $(a_j, b_j)$ is a Nash equilibrium. 
However, the set $\B$ is comprised of both {\it undominated} actions $\B_U$ and {\it dominated} actions $\B_D$, which are unknown to the optimizer, and where each $b_j \in \B_d$ is weakly dominated by the secondary action $s_j$. 
The optimizer receives reward 0 whenever the learner plays a secondary action, and so the challenge for the optimizer is to identify the pair $(a_j, b_j)$ which maximizes $u_A(a_j, b_j)$, for $b_j \in \B_D$, which will be the Stackelberg equilibrium. Further, the safety actions $y_{ij}$ essentially allow the learner to hedge between two actions; this does not pose substantial difficulty for the optimizer when the learner is no-swap-regret, yet creates an insurmountable barrier for learning the Stackelberg equilibrium in sub-exponential time against a mean-based learner.

An instance of a game $G \in \G$ is specified by the partition of $\B$ into $\B_U$ and $\B_D$. There is an action $s_j\in \Ss$ for each $j$, and for each pair $(i, j)$ with $i < j$ there is an action $y_{ij} \in \Y$.
The rewards for a game $G$ are as follows.
For any strategy pair, the optimizer's utility is given by:
\begin{itemize}
    \item $u_A(a_j, b_j) = j/M$ for $b_j \in \B$;  
    \item $u_A(a_i, b_j) = 0$ for $b_j \in \B$ and with $i\neq j$;  
    \item $u_A(a_i, s_j) = 0$ for any $s_j \in \Ss$;
    \item $u_A(a_i, y_{jk}) = 0$ for any $y_{jk} \in \Y$;
\end{itemize}
and the learner's utility is given by: 
\begin{itemize}
    \item For $b_j \in \B_U$:
    \begin{itemize}
        \item $u_B(a_j, b_j) =  1$;
        \item $u_B(a_{i}, b_j) = 0 $ for $i \neq j $;
    \end{itemize}
    \item For $b_j \in \B_D$:
    \begin{itemize}
        \item $u_B(a_{i}, b_j) = 0$ for any $i$; 
    \end{itemize}
    \item For $s_j \in \Ss$:
    \begin{itemize}
        \item $u_B(a_j, s_j) = 1$ if $b_j \in \B_D$;
        \item $u_B(a_j, s_j) = 0$ if $b_j \in \B_U$;
        \item $u_B(a_{i}, s_j) = 0 $ for $i \neq j$; 
    \end{itemize}
    \item For $y_{ij} \in \Y$:
    \begin{itemize}
        \item $u_B(a_i, y_{ij}) = u_B(a_j, y_{ij}) = 2/3$;
        \item $u_B(a_k, y_{ij}) = 0$ for $i,j\neq k$.
    \end{itemize}
\end{itemize}
We assume that $\B_U$ is non-empty, and so there is some optimal pure Nash equilibrium $(a_i^*, b_i^*)$ which yields a reward of $i/M$; it is simple to check that this is also the Stackelberg equilibrium.

\paragraph{Optimizing against no-swap learners.} First, we give a method for matching the Stackelberg value against an arbitrary no-swap-regret learner, which corresponds to the pair $(a_j, b_j)$ for the largest value $j$ such that $b_j \in \B_U$.
Consider a no-swap-regret learner which obtains a regret bound of $\tau = O(T^c)$ over $T$ rounds.
Let $\SR_t(b, b')$ for any learner actions $b$ and $b'$ denote the $t$-round cumulative swap regret between $b$ and $b'$, i.e.\ the total change in reward which would have occurred if $b'$ was played instead for each of the first $t$ rounds in which $b$ was played. 
To model the behavior of an arbitrary no-swap-regret learner, we disallow the learner from taking any action which would increase $\SR_t(b, b')$ above $\tau$, given the loss function for the current round, and otherwise allow the action to be chosen adversarially. 
While our model is deterministic for simplicity, it is straightforward to extend to the analysis to algorithms whose regret bounds hold in only expectation, e.g.\ by considering a distribution over values of $\tau$ in accordance with Markov's inequality (as no algorithm can have negative expected regret against arbitrary adversaries) and considering our expected regret to the Stackelberg value. 

Our strategy for the optimizer is:
\begin{itemize}
    \item For each $i \in [M]$, play $a_i$ until either $b_i$ or $s_i$ is observed at least $t^* > \tau$ times;
    \item Return $a_i^*$ for the largest $i$ such that $b_i$ is observed $t^*$ times.
\end{itemize}
We show that this takes at most $O(T^c \cdot M^3)$ rounds. Once $a_i^*$ is identified, we can commit to playing it indefinitely, at which point the learner must play $b_i^*$ in all but at most $O(T^c \cdot \poly(M))$ rounds, and so with $T = O(\poly(M/\varepsilon))$ rounds we can increase the total fraction of rounds in which $(a_i^*, b_i^*)$ is played to $1-\varepsilon$, which yields the desired average reward bound. 

The key to analyzing the runtime of our strategy is to consider the ``buffer'' in regret between any pair of actions before the threshold of $\tau$ is reached, which enables us to the bound the number of rounds in which instantaneously suboptimal actions are played.
Note that prior the start of window $i$ (where $a_i$ is played), both $b_i$ and $s_i$ obtain reward 0 in each round, and as such cannot decrease their expected regret relative to any other action, as all rewards in the game are non-negative. Further, for any previous window $j$, both $b_i$ and $s_i$ incur  regret of 1 with respect to either $b_j$ or $s_j$, as well as between the suboptimal and optimal action in window $i$, and thus cannot be observed more than $\tau$ times in the window. As such, observing $b_i$ at least $t^*$ times in window $i$ indicates that $b_i \in \B_U$ (and likewise observing $b_i$ at least $t^*$ times indicates  that $b_i \in \B_D$).

Any action $b \neq \BR(a_i)$ will incur positive swap regret with respect to $\BR(a_i)$, and cannot be played in window $i$ once $\SR_t(b, \BR(a_i)) \geq \tau$. Each action begins with $\SR_{1}(b, \BR(a_i)) = 0$ at time $t=1$; for each of the learner's actions, we consider the rate at which its buffer decays, as well as instances in which swap regret can decrease:
\begin{itemize}
    \item Previously optimal $b \in \B \cup \Ss \setminus \BR(a_i)$: actions in $\B \cup \Ss$ can only accumulate negative swap regret with respect to $\BR(a_i)$ during rounds in which they were previously optimal; any previous optimum $b = \BR(a_j)$ for $j < i$ was played at most $t^*$ times during window $j$, and so we have that $\SR_{t}(b, \BR(a_i)) \geq - t^*$.
    \item All $b \in \B \cup \Ss \setminus \BR(a_i)$: ignoring any previously accumulated regret buffer, each of these $2M-1$ actions can be played at most $\tau$ rounds during window $i$ before exhausting their initial buffer. Accounting for possible previous optima with $\SR_{t}(b, \BR(a_i)) < 0$, the number of rounds during window $i$ in which some $b \in \B \cup \Ss \setminus \BR(a_i)$ is played is at most $Mt^* + (2M-1)\tau$.
    \item Safety actions $y_{jk} \in \Y$: Suppose neither $a_j$ or $a_k$ have been played yet by the optimizer, including in the current window. As was the case for other actions which have never yielded positive instantaneous reward, $y_{jk}$ can be played at most $\tau$ times before $\SR_{t}(y_{jk}, \BR(a_i)) \geq \tau$. If $j=i$, i.e.\ this is the first window in which $y_{jk}$ obtains positive instantaneous reward, the per-round regret is $1/3$, and so at it can be played for most $3\tau$ rounds. Further, $y_{jk}$ a obtains a regret of $-2/3$ with respect to $\BR(a_k)$. If $k=i$ and the window for $a_j$ has already been completed, $y_{jk}$ can be played for at most $9\tau$ rounds, as initially we have that $\SR_{t}(y_{jk}, \BR(a_i)) \geq -2\tau$, which again increases by $1/3$ per round. We then have that the total amount of rounds with safety actions played during window $i$ is at most $(12M + M^2) \tau$, as there are fewer than $M^2$ total safety actions, and fewer than $M$ in each of the latter cases.
\end{itemize} 
This yields a per-window runtime across all actions of at most $Mt^* + (M^2 + 10M - 1)\tau$, which is $O(T^c \cdot  M^3 )$ across all windows, and so we obtain the desired result for optimizing against arbitrary no-swap-regret learners.

\paragraph{Optimizing against mean-based learners.}

Here, we show that there are mean-based no-regret algorithms for which exponentially many rounds are required for an optimizer to approximate the Stackelberg value against a learner. When considering horizons which are superpolynomial in the parameters of the game, it is most natural to consider algorithms with regret bounds which are non-trivial for smaller horizons, as well as an anytime variant of the mean-based property.
We define an extension of the classical Multiplicative Weight Updates algorithm ($\MWU$; see \cite{MWU} for a survey), called Rounded Mean-Based Doubling, which inherits both properties in the anytime setting.
\begin{algorithm}
\caption{Rounded Mean-Based Doubling (\textsf{\textup{RMBD}})}\label{alg:rmbd}
\begin{algorithmic}
\STATE Initialize and run $\MWU$ for $T_1 := 2$ rounds and $n$ actions.
\STATE Let $T_2 := 2T_1$ and $i := 2$. 
\WHILE{$T_i \leq T$}
{
\STATE Initialize $\MWU$ for $T_i$ rounds and $n$ actions. 
\STATE Simulate running $\MWU$ for $T_{i-1}$ rounds, using the average of the first $T_{i-1}$ rewards each round. 
\STATE For $T_{i-1}$ rounds, run $\MWU$ with action probabilities rounded to multiples of $4 \gamma = \tilde{O}(T_i^{-1/2})$. 
\STATE Let $T_{i+1} = 2T_i$ and $i := i+1$. 
}
\ENDWHILE
\end{algorithmic}
\end{algorithm}

\begin{lemma}
When running \textsf{\textup{RMBD}} for $T$ rounds, the following hold at any round $t \leq T$:
\begin{itemize}
    \item \textsf{\textup{RMBD}} has cumulative regret $\tilde{O}(n\sqrt{t})$;
    \item If action $j$ has the highest cumulative reward and $\sigma_{i,t} \leq \sigma_{j,t} - \tilde{O}(\sqrt{t})$, then action $i$ is played with probability 0 at round $t$.
\end{itemize}
\end{lemma}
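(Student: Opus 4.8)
The plan is to prove the two claims about RMBD by analyzing what happens within each doubling epoch and then stitching the epochs together. The algorithm proceeds in epochs indexed by $i$, where epoch $i$ lasts $T_{i-1}$ rounds and is preceded by a "warm-start" simulation of $\MWU$ on the averaged history of the previous $T_{i-1}$ rounds. The two things I need are (i) an anytime cumulative regret bound of $\tilde{O}(n\sqrt{t})$ and (ii) an anytime mean-based property: if $\sigma_{i,t} \leq \sigma_{j,t} - \tilde{O}(\sqrt{t})$ for the leader $j$, then action $i$ is played with probability $0$ at round $t$. I would establish each separately, the regret bound first.

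\textbf{Regret bound.} For a single epoch of length $T_{i-1}$, standard $\MWU$ analysis gives regret $\tilde{O}(\sqrt{T_i \log n})$ when the algorithm is calibrated for $T_i$ rounds; I would invoke this as a black box (citing \cite{MWU}). I then need to account for two sources of slack: the rounding of action probabilities to multiples of $4\gamma = \tilde{O}(T_i^{-1/2})$, and the warm-start simulation. Rounding perturbs each per-round expected reward by at most $n \cdot 4\gamma = \tilde{O}(n T_i^{-1/2})$, so summed over $T_{i-1} < T_i$ rounds this contributes only $\tilde{O}(n\sqrt{T_i})$ additional regret. The simulation on averaged rewards resets the algorithm to a good state; I would argue that running $\MWU$ on the averaged history keeps the weights close to where a fresh anytime analysis needs them, so the regret within the epoch is still $\tilde{O}(\sqrt{T_i \log n})$. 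Summing the per-epoch regrets over the $O(\log t)$ epochs completed by round $t$, and using that epoch lengths double so the final epoch dominates geometrically, yields total regret $\tilde{O}(n \sqrt{t})$ at every $t$, giving the anytime guarantee.

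\textbf{Mean-based property.} Here the key is the explicit probability rounding. Within an epoch calibrated for $T_i$ (with $t = \Theta(T_i)$), $\MWU$ with learning rate $\eta = \tilde{\Theta}(T_i^{-1/2})$ assigns action $i$ a weight proportional to $\exp(\eta \sigma_{i,t})$. If $\sigma_{i,t} \leq \sigma_{j,t} - c\sqrt{t}\log(\cdot)$ for the leader $j$, then the \emph{unrounded} probability of $i$ is at most $\exp(-\eta \cdot c\sqrt{t}\,\tilde{O}(1)) = \exp(-\tilde{\Omega}(1))$, which I can drive below the rounding threshold $4\gamma$ by choosing the $\tilde{O}(\sqrt{t})$ gap with a large enough polylog constant. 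Since any probability below $4\gamma$ is rounded down to $0$, action $i$ is played with probability exactly $0$. I would just need to verify that the warm-start simulation does not re-inflate the weight of a badly-lagging action: because the simulation feeds in averaged rewards with the same cumulative ordering, the lagging action remains lagging, so its simulated weight is also exponentially small and survives the rounding to $0$.

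\textbf{The main obstacle} will be handling the interaction between the warm-start simulation step and the anytime bounds, i.e.\ carefully controlling the state of $\MWU$ at epoch boundaries so that neither the regret nor the mean-based property degrades across the doubling transitions. Getting clean constants through the averaging-then-simulating step, and confirming that rounding to multiples of $4\gamma$ is simultaneously coarse enough to zero out lagging actions yet fine enough not to spoil the $\tilde{O}(n\sqrt{t})$ regret, is where the delicate bookkeeping lies; the rest is an assembly of standard multiplicative-weights facts.
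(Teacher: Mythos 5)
Your proposal follows essentially the same route as the paper: a per-epoch $\MWU$ regret bound plus $\tilde{O}(n\sqrt{T_i})$ for the probability rounding, summed geometrically over doubling epochs, and the mean-based property obtained by noting that the warm-start simulation on averaged rewards preserves each action's cumulative reward exactly, so a lagging action's (unrounded) probability falls below the $\tilde{O}(T_i^{-1/2})$ rounding threshold and is set to zero. The one step you flag as the "main obstacle" -- controlling the warm start's effect on the epoch regret -- is resolved in the paper by a short bookkeeping argument: the simulated run's measured reward $\hat{R}(T_i)$ is at most $\sigma_{j^*,T_i}$ (the averaged reward function is constant across simulated rounds, so no algorithm can beat the best action on it), while the full simulated-plus-real run earns at least $\sigma_{j^*,2T_i}-C\sqrt{2T_i}$, so the real reward $\hat{R}(T_{i+1})-\hat{R}(T_i)$ telescopes to give the inductive bound $D\sqrt{T_{i+1}}$.
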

\begin{proof}
 Let $C\sqrt{t}$ bound the regret of $\MWU$ over $t$ rounds (where $C = O(\sqrt{\log n})$), and let $D = \sqrt{2}C + \tilde{O}(n)$. We can bound the regret of $\RMBD$ over $T_i$ rounds by $D\sqrt{T_i}$ via induction (which holds trivially at $T_1$). Suppose it holds for some $T_i$. Let $R(T_i)$ be the true reward obtained by $\RMBD$ over $T_i$ rounds, which is at least $\sigma_{j^*, T_i} - D \sqrt{T_i}$, where $\sigma_{j^*, T_i}$ is the cumulative reward of the best action over $T_i$ rounds.  
Consider our simulation of $\MWU$ over $T_i$ rounds using the average reward function. As the reward function is identical each round, and the cumulative reward for each action $j$ is equivalent under averaging, the measured reward $\hat{R}(T_i)$ from the simulated run is at most $\sigma_{j^*, T_i}$ after $T_i$ rounds. Upon continuing to run this instance of $\MWU$ for an additional $T_i$ rounds, the regret bound ensures that the total measured reward $\hat{R}(T_{i+1})$ is at least $\sigma_{j^*, 2T_{i}} - C\sqrt{2T_i}$. 
Rounding probabilities contributes at most an additional $2n\gamma T_i$ to the regret; it suffices to implement rounding by reallocating probability mass from any $p_{i,t} < 2\gamma$ onto other actions arbitrarily, to avoid renormalization. 
The total reward of $\RMBD$ over $2T_i = T_{i+1}$ is given by its cumulative reward at $T_i$, as well as the additional reward obtained by the $\MWU$ instance over the next $T_i$ rounds, and so we have that
\begin{align*}
    R(T_{i+1}) =&\; R(T_i) + \hat{R}(T_{i+1}) - \hat{R}(T_i) \\
    \geq&\; \sigma_{j^*, T_{i+1}} - D\sqrt{T_i} - C\sqrt{2T_i} - 2n\gamma T_i \\
    \geq&\; \sigma_{j^*, T_{i+1}} - D\sqrt{T_{i+1}}, 
\end{align*}
which yields the bound for every $T_i$. We can extend this to any $t \in [T_i, T_{i+1}]$ with at most a factor 2 increase to cumulative regret.

To bound the selection frequency of actions with suboptimal cumulative reward, we recall the mean-based analysis of $\MWU$ given in Theorem D.1 from \cite{BMSWselling}, 
which shows that the selection frequency $p_{k,t}$ for action $k$ at time $t$ is at most $\gamma = \frac{2\log(\sqrt{T \log n})}{\sqrt{T \log n}} $ if $\sigma_{k,t} \leq \sigma_{j,t} - \gamma T$ for the action $j$ with highest cumulative reward. As such, any action whose cumulative reward $\sigma_{k,t} \leq \sigma_{j,t} - \tilde{O}(\sqrt{t})$ will be played with probability 0.
\end{proof}

Suppose a learner plays the action with highest cumulative reward at each round for $t_{\text{burn}} = \tilde{\Omega}(M^2)$ rounds, then plays $\RMBD$ thereafter for a total of $T$ rounds. Note that this maintains the both properties of $\RMBD$ for all $t$. We show that at least $T = \exp(\Omega(M))$ rounds are required to identify the Stackelberg strategy. The optimizer must check the learner's pure best response to each $a_j$ for identification with certainty, and it is straightforward to construct a distribution in which any strategy which does not observe  $\BR(a_j)$ for all $j$ will have linear regret to $\Stack_{A}$ in expectation (e.g. where $\B_U$ contains one action chosen uniformly at random). The difficulty in exploration of the best responses comes from the safety actions, as $a_j$ must have been played more frequently than any other action in order to not be dominated by some safety action. 
Let $\rho_{j, t}$ denote the number of rounds in which the optimizer has played $a_j$ out of the first $t$. 
Observe that by construction of the game and the properties of $\RMBD$, an primary or secondary action $b_j$ or $s_j$ in $\BR(a_j)$ will only be played with positive probability when:
\begin{align*}
    \rho_{j, t} \geq&\; \frac{2}{3} (\rho_{j, t} + \rho_{k, t}) - \tilde{O}(\sqrt{t}) \\
    =&\; 2 \rho_{k, t} - \tilde{O}(\sqrt{t}) 
\end{align*}
for all $k$, which necessitates that $\rho_{j, t} \geq \frac{2t}{M} - \tilde{O}(\sqrt{t})$.
Taking $t_{\text{burn}}$ sufficiently large, we have that  $\rho_{j, t} \geq \frac{3}{2} \rho_{k, t}$ for any $t \geq t_{\text{burn}}$ and all $k$. For any subsequent observation $\BR(a_k)$ at $t'$, we must have that $\rho_{k, t'}\geq \frac{3}{2} \rho_{j,t}$, and so the number of rounds required to play an action before observing its best response grow at a rate of at least $(3/2)^M$, which completes the proof. 

\end{proof}

\end{document}